\let\cite\citep
\newcommand{\mc}{\mathcal}
\newcommand{\mb}{\mathbb}
\DeclareMathOperator{\R}{\mathbb{R}}
\newcommand{\LFT}{{\tt LFT}}
\DeclareMathOperator{\spec}{\mathrm{spec}}
\newcommand{\cj}[1]{c_{#1}}
\newcommand{\cjc}[1]{c_{#1}^{\tt c}}
\newcommand{\xc}[1]{x_{#1}^{\tt c}}
\newcommand{\Lc}[1]{L_{#1}^{\tt c}}
\newcommand{\ellc}[1]{\ell_{#1}^{\tt c}}
\newcommand{\x}[1]{x_{#1}}
\newcommand{\Con}[1]{\mc{C}_{#1}}
\DeclareMathOperator*{\amin}{\mathrm{argmin}}
\newtheorem{assumption}{Assumption}
\newtheorem{theorem}{Theorem}
\newtheorem{lemma}{Lemma}
\newtheorem{corollary}{Corollary}
\newtheorem{definition}{Definition}
\newtheorem{proposition}{Proposition}
\newtheorem{remark}{Remark}
\newcommand{\bmat}[1]{\begin{bmatrix}
#1
\end{bmatrix}}
\newcommand{\edit}[1]{\textcolor{black}{#1}}
\title{ Consistent Conjectural Variations Equilibrium:\\
Characterization \& Stability for a Class of Continuous Games
}
\author{Daniel J. Calderone, Benjamin J. Chasnov, Samuel A. Burden, Lillian J. Ratliff}
\begin{document}
\maketitle

\begin{abstract}

Leveraging tools from the study of linear fractional transformations and algebraic Riccati equations, a local characterization of consistent conjectural variations equilibrium is given for two player games on continuous action spaces with costs approximated by quadratic functions. A discrete time dynamical system in the space of conjectures is derived,
a solution method for computing fixed points of these dynamics (equilibria) is given, local stability properties of the dynamics around the equilibria are characterized, and conditions are given that guarantee a unique stable equilibrium.
\end{abstract}

\section{Introduction}
In many multi-agent systems, the agents are learning about their opponents and the environment through interaction. Moreover, the agents often have bounded rationality---e.g., humans are known to not behave rationality~\cite{simon1955behavioral}, and machines inherently have bounded computational capabilities and are limited to making decisions based on their prescribed algorithmic process. 
Much of the literature on using game theory to model multi-agent systems has focused on static equilibrium notions that assume agents are rational such as Nash  or correlated equilibrium. These equilibrium concepts do not  capture the dynamic nature of learning systems or that in many cases agents form models of their opponent and react or optimize with respect to them. 

To address these issues, several different fields have examined the use of opponent models. The following examples are demonstrative.
In machine learning, opponent modeling~\cite{foerster2018learning,willi2022cola} can empirically improve the performance of reinforcement learning agents in some environments, while symplectic  methods can speed up convergence of gradient play in continuous games with certain structure~\cite{balduzzi2018mechanics,junsoo2022convergence}.
In game theory, opponent models known as \emph{conjectural variations}~\cite{figuieres2004theory} have been used to analyze strategic behaviors of firms in oligopoly and electricity markets~\cite{perry1982oligopoly,friedman2002bounded,liu2006empirical,diaz2010electricity}.
At the intersection of these areas, in prior work, we investigate the connection between gradient play and opponent anticipation leveraging conjectural variations~\cite{chasnov2020opponent}, and showed the relationship to implicit learning algorithms in  Stackelberg games~\cite{fiez2020implicit}.
Despite existing work there still remains several technical challenges in terms of characterizing the dynamic interaction of learning agents who form opponent models. 


Motivated by coupled non-cooperative learning systems wherein decision-makers have an opponent model and optimize with respect to this model, we provide a novel characterization of a (consistent) conjectural variations equilibrium ((C)CVE) \cite{bowley1924mathematical,frisch1933monopole}. A CVE is a non-cooperative equilibrium concept---predating even Nash---in which  each agent chooses their most favorable action taking into account that opponent strategies are a conjectured mapping of their own strategy. 
To gain intuition, a CVE can be thought of as a \emph{double sided Stackelberg equilibrium}. Indeed in a Stackelberg game, the leader best responds to a myopic follower---i.e. it solves $\min_x\{f(x,y)|\ y\in \amin_{y'}g(x,y')\}$. When both players act like a leader we have a double-sided Stackelberg game. 
This is a special case of a CVE wherein the conjecture is simply the myopic best response model of the follower. Conjectures can be more general mappings, however. Such an equilibrium is \emph{consistent} if each player's strategy in equilibrium is consistent with that which is conjectured by its opponent.
Unlike a Nash equilibrium, a (C)CVE handles strategic uncertainty through the use of conjectures, and has the following interpretation in terms of incentives: at a CVE no player has an incentive to deviate according to their own beliefs.
Our  interest in this equilibrium concept is precisely due to its aptitude for capturing  dynamic contexts, or situations of bounded (procedural)
rationality, or both. In particular, as we will highlight in the sequel, a (C)CVE can be seen as arising from repeated best response given an opponent model. 

\textbf{Contributions.} We leverage tools from the study of linear fractional transformations, and algebraic Riccati equations to provide a novel characterization of consistent conjectural variations equilibria for two-player $d_1\times d_2$ continuous games with quadratic costs; a quadratic game can also be thought of as an local approximation of more general costs. Focusing on conjectures that are affine in player actions, we derive a set of coupled Riccati equations and show that CCVE exist if these equations have solutions.
Additionally, we show that these coupled Riccati equations naturally lead to a discrete time dynamical system when they are iterated, ie. when players update their conjectures to be the best response to their opponent's conjecture in the affine conjecture space. 
We give a general solution method for computing fixed points of these dynamics (CCVE of the game) via solving an eigenvalue problem; we analyze the local stability properties of the dynamics around the CCVE; and we give conditions that guarantee a unique, stable CCVE. Finally, we discuss second order conditions and conclude with illustrative numerical examples and discussion.

\section{Preliminaries}
Consider the two-player game $\mc{G}=(f_1,f_2)$ such that $f_i\in C^2(\mb{R}^{d_1}\times \mb{R}^{d_2},\mb{R})$ for each $i\in \{1,2\}$. 
The function $f_i:\mb{R}^{d_1}\times \mb{R}^{d_2}\to \mb{R}$ is player $i$'s cost, which they seek to  minimize  by choosing   $x_i\in \mb{R}^{d_i}$. 
Let $x = (x_1,x_2) \in \mathbb{R}^d$ where $d=d_1 + d_2$ denotes the dimension of the joint action space.  
\edit{Let the set of \emph{conjectures} be the set of mappings defined by}
\[
\Con{1}\times\Con{2}=\{(\cj{1},\cj{2})|\ \cj{1}:\mathbb{R}^{d_2} \to \mathbb{R}^{d_1}, \ \cj{2}:\mathbb{R}^{d_1} \to \mathbb{R}^{d_2}\}
.\]

\begin{definition}\label{def:ccve}\rm
A point $\xc{} = (\xc{1},\xc{2})$ and a pair of conjectures $(\cjc{1},\cjc{2})\in \Con{1}\times\Con{2}$ is a \emph{consistent conjectural variations equilibrium (CCVE)} if  $\xc{i}=\cjc{i}(\xc{-i})$ for each $i=1,2$, and
\begin{equation*}
    \xc{i}=\amin_{x_i}\{f_i(x_i,x_{-i})|\ x_{-i}=\cjc{-i}(x_i)\},\quad\forall \ i=1,2.
\end{equation*}
\end{definition}
\edit{Given an a priori fixed set of conjectures $(\cjc{1},\cjc{2})\in \Con{1}\times \Con{2}$ in a CCVE, the point $ (\xc{1},\xc{2})$ is a generalized Nash equilibrium of the constrained game $\{\min_{x_i}f_i(x_i,\cjc{-i}(x_i))|\ x_i=\cjc{i}(\cjc{-i}(x_i))\}_{i=1}^2$. However,  finding a CCVE requires finding the maps $(\cjc{1},\cjc{2})$, so the problem of characterizing CCVE does not immediately reduce to a generalized Nash equilibrium problem.} 

As shown in \cite{bacsar1998dynamic}, when the costs are (jointly) strictly convex, an equivalent characterization of a CCVE in terms of the conjectures is the following: the two conjectures $(\cjc{1},\cjc{2})\in \Con{1}\times\Con{2}$ are a CCVE if and only if, for each $i=1,2$, we have
\begin{equation}\label{eq:ccvconj}
D_{x_i}f_i(x) + D_{x_{-i}}f_i(x) D_{x_i} \cjc{-i}(x_i) \equiv 0,\ \   x_i=\cjc{i}(x_{-i}),
\end{equation} 
where $D_{x}$ is the partial derivative operator with respect to a vector $x$.
\edit{In the absence of joint strict convexity, these are first order conditions; we call solutions to \eqref{eq:ccvconj} \emph{first-order CCVE}. A \emph{second-order CCVE} is a solution to \eqref{eq:ccvconj} with the additional condition that $\min_{x_i}\{f_i(x_i,\cjc{-i}(x_i))\}$ is strongly convex.} 

The focus of this paper is on characterizing CCVE and corresponding conjectures up to first- and second-order using a quadratic approximation of the game around the equilibrium. When the game is quadratic, a second-order CCVE is precisely a CCVE. 
Even in quadratic games, the existence of CCVE is not guaranteed, and as we show, for affine conjectures the question of existence boils down to finding solutions to coupled asymmetric Riccati equations. This is analogous to the existence of Nash equilibrium in dynamic linear quadratic games (cf.~\cite{aboukandil2003matrix}, \cite[Ch.~6]{bacsar1998dynamic}).
\subsection{Quadratic Game Approximation}
\label{subsec:quadratic-game-approximation}

The local quadratic approximation of cost $f_i$ is given by
\begin{align*}
\label{eq:cost_form}
    f_i(x_i,x_{-i})&=\frac{1}{2}\bmat{x_i\\ x_{-i}}^\top\bmat{A_i & B_i^\top\\ B_i& D_i}\bmat{x_i\\ x_{-i}}+\bmat{a_i\\ b_{i}}^\top\bmat{x_i\\ x_{-i}},
\end{align*}
where $A_i\in \mb{R}^{d_i\times d_i}$, $D_i\in\mb{R}^{d_{-i}\times d_{-i}}$, $B_i\in \mb{R}^{d_{-i}\times d_i}$, $a_i\in \mb{R}^{d_i}$ and $b_i\in \mb{R}^{d_{-i}}$. Further, we assume that $A_i\succ 0$ for each $i=1,2$.
The $D_i$ matrices penalize player $i$ based solely on $x_{-i}$ and may often be negative or zero.
As noted quadratic games are a useful approximation of the behavior of more complex games around an equilibrium. 
Moreover,  quadratic games of the form considered capture finite time linear quadratic games with open-loop strategies, since the dynamics can effectively be ``unrolled" and the strategy $x_i$ is simply the stacked vector of control inputs.

We consider only the space of affine conjectures; analogous to affine optimal policies in linear quadratic optimization problems, affine conjectures are the most natural class of conjectures for quadratic games as will be illustrated through our analysis. In fact, it is straightforward to show that if a player has an affine conjecture for its opponent, then the best response for that player is itself an affine policy. 
With this in mind, let 
player $i$ have an affine conjecture given by
$
x_{-i}  = \cj{-i}(x_i) =  L_{i}x_i + \ell_{i}$.
This results in player $i$ facing the following optimization problem:
\[\min_{x_i}\{f_i(x_i,x_{-i})|\ x_{-i}=c_{-i}(x_i) = L_{i}x_i+\ell_{i}\}.\]
The conditions for a first-order CCVE $(x_1,x_2)$ in affine conjectures are 
\begin{equation}\label{eq:firstorderone}
  \begin{aligned}
0&=D_{x_1}f_1(\x{1},\cj{2}(\x{1})), \  0=D_{x_2}f_2(\cj{1}(\x{2}),\x{2}),\\
\cj{2}(x_1) &=L_1\x{1}+\ell_1, \quad\ \ \cj{1}(x_2)=L_2\x{2}+\ell_2.
\end{aligned}  
\end{equation}
Given \eqref{eq:firstorderone}, the implications for existence can be summarized in the following proposition.
\begin{proposition}
For a quadratic game $(f_1,f_2)$, given affine conjectures of the form 
$\cj{-i}(x_{i})=L_{i}x_{i}+\ell_{i}$
for $i=1,2$, a first-order CCVE exists if there are solutions to coupled Riccati equations:
 \begin{align}
L_{-i}^\top  (A_i + B_i^\top  L_i)  +(B_i +D_iL_i)=0, \quad \forall\   i\in \{1,2\}.     \label{eq:coupledriccati}
 \end{align} 
 In addition, if a solution to \eqref{eq:coupledriccati} satisfies
  \begin{align}
A_i + L_i^\top B_i + B_i^\top L_i + L_i^\top D_i L_i  \succ 0
\quad \forall \ i\in \{1,2\},
\label{eq:2ndorderprop}
\end{align}
then that solution is a CCVE.
%
\end{proposition}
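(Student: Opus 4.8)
The plan is to reduce the consistency condition \eqref{eq:ccvconj}, specialized to affine conjectures and the quadratic cost, to the coupled Riccati equations \eqref{eq:coupledriccati} by separating the ``slope'' and ``intercept'' content of the consistency identity, and then to show that the positive-definiteness condition \eqref{eq:2ndorderprop} upgrades a first-order stationary point to a genuine minimizer.

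First I would expand the quadratic form so that the relevant gradients are $D_{x_i}f_i = A_i x_i + B_i^\top x_{-i} + a_i$ and $D_{x_{-i}}f_i = B_i x_i + D_i x_{-i} + b_i$ (written as column vectors). Substituting the affine conjecture $x_{-i} = \cj{-i}(x_i) = L_i x_i + \ell_i$ into player $i$'s reduced objective $f_i(x_i, L_i x_i + \ell_i)$ and applying the chain rule, the first-order condition \eqref{eq:ccvconj} takes the form $D_{x_i}f_i + L_i^\top D_{x_{-i}} f_i = 0$. The key observation is that \eqref{eq:ccvconj} is required to hold as an identity along the equilibrium manifold $x_i = \cj{i}(x_{-i}) = L_{-i} x_{-i} + \ell_{-i}$, i.e.\ for every $x_{-i}$, not merely at the equilibrium point.

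Next I would exploit that, after substituting $x_i = L_{-i} x_{-i} + \ell_{-i}$, the expression $D_{x_i}f_i + L_i^\top D_{x_{-i}} f_i$ is affine in $x_{-i}$; hence the identity holds iff both its linear coefficient and its constant term vanish. Collecting the coefficient of $x_{-i}$ (equivalently, differentiating with respect to $x_{-i}$) yields $(A_i + L_i^\top B_i) L_{-i} + (B_i^\top + L_i^\top D_i) = 0$; transposing and using the symmetry of $A_i$ and $D_i$ gives exactly \eqref{eq:coupledriccati}. This shows that consistent affine slopes are precisely the solutions of the coupled Riccati equations, and conversely that any solution $(L_1,L_2)$ satisfies the slope part of the consistency identity. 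Matching the remaining constant terms then leaves a linear system in the intercepts $(\ell_1,\ell_2)$ and the point $(\x{1},\x{2})$, obtained by combining the two vanishing-constant conditions with the fixed-point relations $\xc{-i} = L_i \xc{i} + \ell_i$; I would argue this system is solvable, producing a point and conjectures satisfying \eqref{eq:firstorderone}, hence a first-order CCVE.

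For the second-order claim, I would observe that the reduced objective $h_i(x_i) = f_i(x_i, L_i x_i + \ell_i)$ is quadratic with Hessian $A_i + L_i^\top B_i + B_i^\top L_i + L_i^\top D_i L_i$. Condition \eqref{eq:2ndorderprop} states this Hessian is positive definite, so $h_i$ is strongly convex and the stationary point identified above is its unique global minimizer; by Definition~\ref{def:ccve} this promotes the first-order CCVE to a (second-order) CCVE, which for a quadratic game coincides with a CCVE. The main obstacle I anticipate is the slope/intercept separation: correctly decomposing the functional identity and tracking the transposes so that the slope condition lands exactly on \eqref{eq:coupledriccati}, together with confirming that the intercept/equilibrium linear system is consistent---which the positive definiteness in \eqref{eq:2ndorderprop} guarantees through nonsingularity of the reduced Hessians.
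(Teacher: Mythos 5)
Your proposal is correct and takes essentially the same route as the paper: apply the chain rule to the reduced objective, substitute the consistency relation $x_i = L_{-i}x_{-i} + \ell_{-i}$, and require the resulting affine-in-$x_{-i}$ expression to vanish identically, with the linear coefficient yielding \eqref{eq:coupledriccati} (up to a transpose, using symmetry of $A_i$, $D_i$) and the constant term determining the intercepts, while the second-order claim follows by reading off the reduced Hessian $A_i + L_i^\top B_i + B_i^\top L_i + L_i^\top D_i L_i$ exactly as the paper does. The only minor divergence is bookkeeping: you work in column-vector form and justify solvability of the intercept system via \eqref{eq:2ndorderprop}, whereas the paper instead assumes invertibility of $A_i + B_i^\top L_i$ and computes $\ell_{-i}$ explicitly through \eqref{eq:L2fromL1}.
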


Indeed, applying the chain rule, player $i$'s first order optimality conditions are 
\begin{align*}
0 & = 
x_i^\top ( A_i +  B_i^\top L_i ) + x_{-i}^\top  ( B_i+D_iL_i) + a_i^\top + b_i^\top L_{i}. 
\end{align*}
If $x_i$ is consistent with player $-i$'s conjecture, then $x_i = L_{-i}x_{-i} + \ell_{-i}$. Plugging this in for $x_{i}$ gives 
\begin{align*}
0&=x_{-i}^\top (L_{-i}^\top (A_i + B_i^\top L_i) + B_i + D_iL_i)\\
&\quad+ \ell_{-i}^\top (A_i + B_i^\top L_i) + a_i^\top + b_i^\top L_i.
\end{align*}
For this to be true for any $x_{-i}$, we need the conditions in \eqref{eq:coupledriccati} to hold.
Equivalently, (and perhaps more intuitively) supposing the inverse of $(A_i+B_i^\top L_i)$ exists, we can rewrite player $i$'s first order condition as $ x_i^\top  = -x_{-i}^\top (B_i +D_iL_i)(A_i + B_i^\top  L_i)^{-1} - (a_i^\top  + b_i^\top  L_i)(A_i + B_i^\top  L_i)^{-1} $  and the consistent conjecture conditions are given directly by 
 \begin{equation}
 \label{eq:L2fromL1} 
     \begin{aligned}
     L_{-i}^\top  & = -(B_i +D_iL_i)(A_i + B_i^\top  L_i)^{-1},\\
      \ell_{-i}^\top  & = - (a_i^\top  + b_i^\top  L_i)(A_i + B_i^\top  L_i)^{-1} \ \ \ \forall\ i\in \{1,2\}.
     \end{aligned}
 \end{equation}
This shows that if a player has an affine conjecture for its opponent's play, then its best response can be written as an affine policy. Note that the Riccati equations \eqref{eq:coupledriccati} are sufficient for first order conditions since $\ell_i$ can be computed separately based on $L_i$ for $i=1,2$.

For quadratic games, a second-order CCVE is equivalent to a CCVE. 
The following conditions characterize when a second-order CCVE exits.
Expanding out player $i$'s cost given the affine conjecture, we get
\begin{align*}
f_i(x_i,\cj{-i}(x_i)) 
& =
\tfrac{1}{2}x_i^\top (A_i + L_i^\top B_i + B_i^\top L_i + L_i^\top D_i L_i ) x_i \\
&
+ (a_i^\top + \ell_i^\top B_i +b_i^\top L_i ) x_i
+ \ell_i^\top D_i \ell_i + b_i^\top \ell_i.
\end{align*}
Hence, player $i$'s optimization problem is strongly convex if \eqref{eq:2ndorderprop} hold.
We use $(\Lc{i},\ellc{i})$ to refer consistent conjectures---i.e., the solutions to the coupled Riccati equations \eqref{eq:coupledriccati} and the corresponding affine offsets. 
Solutions may still exist when the inverses in \eqref{eq:L2fromL1} do not, however, as has been shown in special cases in the literature on CCVE such as for scalar Bertrand games, this leads to a multiplicity of solutions and an equilibrium selection problem (see \cite{olsder1981memo, figuieres2004theory} and references therein). Given page constraints, we leave the analysis of these more nuanced cases to a future paper. 

For each $i=1,2$, define the following linear fractional transformation (LFT) update: 
\begin{equation*}
L^+_{-i}  = \LFT_{i,-i}( L_i ) = -(A_i^\top  + L_i^\top  B_i )^{-1}(B_i^\top  +L_i^\top D_i^\top ), 
\end{equation*}
where the subscript $(\cdot)_{12}$ can be read as ``from 1 to 2". The update for $L_i$ naturally defines discrete-time dynamics in the conjecture parameter space 
that show how a player should update their conjecture to be consistent with their opponent's current conjecture. 
It is also useful to think of dynamic updates for each player separately constructed by composing the updates as follows:
\begin{align}
L^+_i  & = \LFT_{-i,i}(
\LFT_{i,-i}(L_i)
) \label{eq:update1} \\
& = - \left(A_{-i}^\top - (B_i +D_iL_i)(A_i + B_i^\top  L_i)^{-1} B_{-i}\right)^{-1} \notag \\
&\quad\cdot(B_{-i}^\top  - (B_i +D_iL_i)(A_i + B_i^\top  L_i)^{-1} D_{-i}^\top ), \ \ i=1,2. \notag
\end{align}
\begin{remark}
The first order conditions in \eqref{eq:coupledriccati}
guarantee that the players have consistent conjectures. The second order conditions \eqref{eq:2ndorderprop} guarantee that given their conjecture, player $i$'s cost is 
convex in $x_i$. Expounding the first order conditions---characterizing the LFT dynamics, finding fixed points by solving  \eqref{eq:coupledriccati}, and characterizing their stability---is non-trivial and is the primary focus of this paper. Our results will show that there is a limited number of stable first-order CCVE.  Once these stable equilibria are found, the second order conditions \eqref{eq:2ndorderprop} can easily be checked.  For further discussion, see Section \ref{sec:2ndorder}.
\end{remark}



\subsection{LFT Matrix Representation}
We will see in the subsequent section that LFTs can be efficiently represented by matrices and their composition by matrix manipulation. Towards this end, let us define some useful objects that will be used throughout.
Define the $d\times d$ real valued matrices (where $d = d_1 + d_2$)
\begin{equation}\label{eq:LFTmatrices}
M_1  = 
\begin{bmatrix}
A_1 & B_1^\top  \\
B_1 & D_1
\end{bmatrix}, \ \ \text{and}\ \ 
M_2 = 
\begin{bmatrix}
D_2 & B_2 \\
B_2^\top  & A_2
\end{bmatrix}.
\end{equation}
We make the following assumption on $M_1$ and $M_2$. 
\begin{assumption}\label{a:well-defined} The matrices $M_1,M_2$ are invertible.
\end{assumption}
We will be directly interested in the two products $\mathbf{M}_1 = M_2^{-\top} M_1$ and $\mathbf{M}_2 = M_1^{-\top} M_2$.
Note that 
$
M_1,M_2 \ \text{invertible}
\iff 
\mathbf{M}_1,\mathbf{M}_2 \ \text{invertible} 
$
Let $\text{spec}\big(\mathbf{M}_1\big)$ and $\text{spec}\big(\mathbf{M}_2\big)$ refer to the spectra of each matrix.  A simple argument shows that $\text{spec}(\mathbf{M}_1) = 1/\text{spec}(\mathbf{M}_2)$ where we use $1/(\cdot)$ to mean element-wise inversion.  

\subsection{Examples}
In this section, we present two examples of consistent conjectural variations equilibria in quadratic games. 
\subsubsection{Linear quadratic dynamic game}
Consider a two player linear quadratic dynamic game with open loop policies $\mathbf{u}_i=(u_{i,0},\ldots, u_{i,T-1})$ for $i=1,2$:
\begin{align*}
   f_i(\mathbf{u}_1,\mathbf{u}_2)&=\sum_{t=0}^{T-1} \ \ \frac{1}{2}z_t^\top Q_iz_t+\frac{1}{2} {u}_{i,t}^\top R_i {u}_{i,t}+{u}_{i,t}^\top R_{i,-i} {u}_{-i,t}+\frac{1}{2} z_{T}^\top Q_{i,f}z_T\\ 
   z_{t+1}&=Fz_t+G_1u_{1,t}+G_2u_{2,t}, \ z_t\in \mb{R}^{n}.
\end{align*}
%
%
Unfolding the dynamics we have that for $Z=[z_0^\top,\ldots,z_T^\top]^\top$, we have $Z=W_1\mathbf{u}_1+W_2\mathbf{u}_2+\bar{F}z_0$
where \begin{align*}
W_i&=\bmat{0& \cdots& & & 0\\ G_i& 0 &\cdots& &0\\
  FG_i & G_i& 0& \cdots&0\\
    \vdots &\vdots &\ddots &\ddots&\vdots \\
    F^{T-2}G_i& F^{T-3}G_i& \cdots & G_i & 0\\
    F^{T-1}G_i& F^{T-2}G_i& \cdots & FG_i& G_i},\quad i=1,2,
\end{align*}
and \[\bar{F}=\bmat{I & F^\top  \cdots (F^{T})^\top }^\top.\]
Define the following cost matrices:
\begin{align*}
    \mathbf{Q}_i&:=\text{diag}(Q_i,\ldots, Q_i,Q_{i,f})\in \mathbb{R}^{n(T+1)\times n(T+1)},\\
    \mathbf{R}_i&:=\text{diag}(R_i,\ldots, R_i)\in \mb{R}^{d_iT\times d_iT},\\
 \mathbf{R}_{i,-i}&:=\text{diag}(R_{i,-i},\ldots, R_{i,-i})\in \mb{R}^{d_iT\times d_{-i}T}.
\end{align*} 
Hence player $i$'s cost is given by
\begin{align*}
    f_i(&\mathbf{u}_i,\mathbf{u}_{-i})= \frac{1}{2} \mathbf{u}_i^\top \mathbf{R}_i\mathbf{u}_i+\mathbf{u}_{i}^\top \mathbf{R}_{i,-i}\mathbf{u}_{-i}
    +\frac{1}{2} (W_1\mathbf{u}_1+W_2\mathbf{u}_2+\bar{F}z_0)^\top\mathbf{Q}_i (W_1\mathbf{u}_1+W_2\mathbf{u}_2+\bar{F}z_0).
\end{align*}
Expanding and regrouping this cost, we can recover $(A_i,B_i,D_i,a_i,b_i)$ for each player. Indeed, we have that
    \begin{align*}
    f_i(\mathbf{u}_1,\mathbf{u}_2)
    &= \tfrac{1}{2} \mathbf{u}_i^\top (\mathbf{R}_i+W_i^\top \mathbf{Q}_iW_i)\mathbf{u}_i+ \tfrac{1}{2} z_0^\top \bar{F}^\top \mathbf{Q}_i\bar{F}z_0+\mathbf{u}_i^\top(\mathbf{R}_{i,-i}+W_i^\top \mathbf{Q}_iW_{-i})\mathbf{u}_{-i}\\
    &\quad+z_0^\top \bar{F}^\top \mathbf{Q}_i(W_i\mathbf{u}_i+W_{-i}\mathbf{u}_{-i}) + \tfrac{1}{2} \mathbf{u}_{-i}^\top W_{-i}^\top \mathbf{Q}_iW_{-i}\mathbf{u}_{-i},\\
    &=\frac{1}{2}\bmat{\mathbf{u}_i\\ \mathbf{u}_{-i}}^\top\bmat{(\mathbf{R}_i+W_i^\top \mathbf{Q}_iW_i) &(\mathbf{R}_{i,-i}+ W_i^\top \mathbf{Q}_iW_{-i}) \\(\mathbf{R}_{i,-i}+ W_i^\top \mathbf{Q}_iW_{-i})^\top & W_{-i}^\top \mathbf{Q}_iW_{-i} }\bmat{\mathbf{u}_i\\ \mathbf{u}_{-i}}\\
    &\quad+z_0^\top \bar{F}^\top \mathbf{Q}_i(W_i\mathbf{u}_i+W_{-i}\mathbf{u}_{-i})+\tfrac{1}{2} z_0^\top \bar{F}^\top \mathbf{Q}_i\bar{F}z_0,
\end{align*}
so that 
\begin{align*}
A_i & =\mathbf{R}_i+W_i^\top \mathbf{Q}_iW_i, \qquad \qquad \qquad  \ \ 
a_i^\top  =z_0^\top \bar{F}^\top \mathbf{Q}_iW_i \\
B_i & =\big(\mathbf{R}_{i,-i}+W_i^\top \mathbf{Q}_iW_{-i}\big)^\top  \qquad \qquad 
b_i^\top =z_0^\top \bar{F}^\top \mathbf{Q}_iW_{-i} \\
D_i & = W_{-i}^\top \mathbf{Q}_iW_{-i} 
\end{align*}
In a typical LQR problem it is assumed that $\mathbf{R}_i\succ 0$ in order for solutions to exist (there are conditions that weaken this assumption), and hence $A_i\succ 0$. Since $A_i$ is non-degenerate under the assumption $\mathbf{R}_i\succ 0$, a sufficient condition for $\mathbf{M}_i$ for $i=1,2$ to each be non-degenerate is that the Schur complement of $M_i$ with respect to $(\mathbf{R}_i+W_i^\top Q_iW_i)$ is non-degenerate; indeed, this follows from the fact that \[[\det(M_i)\neq 0\ \forall i\in\{1,2\}]\Longleftrightarrow[\det(\mathbf{M}_i)\neq 0\ \forall i\in\{1,2\}].\]

\subsubsection{Adaptive human-machine interactions}
It has recently been shown that CCVE well-model human-machine co-adaptation~\cite{chasnov2023human}. In this study the human and the machine have scalar quadratic costs, and series of experiments explore convergence of repeated game play to CCVE in an computer-facilitated task. The costs for the human and the machine are given by
\begin{align*}
    f_i(x_i,x_{-i})=\frac{1}{2}\bmat{x_i\\ x_{-i}}^\top\bmat{q_i & r_i\\ r_i& s_i}\bmat{x_i\\ x_{-i}}+\bmat{w_i\\ v_{i}}^\top\bmat{x_i\\ x_{-i}},
\end{align*}
where all the cost parameters are scalars and $x_i\in \mb{R}$ for each $i=1,2$. Assumption~\ref{a:well-defined} is satisfied for this game if $\det(M_i)\neq 0$ $\Longleftrightarrow$ $q_is_i-r_i^2\neq 0$ for each $i=1,2$. This holds for the games studied in \cite{chasnov2023human}, and further it is shown in the supplement of the same reference that CCVE exist in affine conjectures for the games studies therein.

\subsection{Warm-Up: Scalar M\"obius Transformations}
In order to get some intuition for why the matrices $\mathbf{M}_i$ have the form they, it is instructive to consider the scalar setting. 
It turns out that for scalar games the LFT description of the CCVE conjecture parameters is equivalent to a M\"obius transformation, and examining this case provides useful intuition for the more general case.

Indeed, the variables $L_1, L_2$ and parameters are all now scalar so that the LFT for player one reduces to 
\begin{equation*}
    L_1 = \LFT_{21}(L_2) = -(B_2  +L_2  D_2)/(A_2  + L_2B_2),
\end{equation*}
with composition map $\LFT_{21}\circ \LFT_{12}(L_1)$ given by
\begin{equation*}
   L_1  = 
\frac{(B_1D_2-A_1B_2) + (D_1D_2-B_1B_2)L_1}{(A_1A_2 - B_1 B_2) + (A_2 B_1 - D_1B_2)L_1}  
\end{equation*}
The expressions for player two are analogous.
It is well known that a M\"obius transformation can be represented in the matrix form
given in \eqref{eq:LFTmatrices} with scalar the entries \cite[Ch. VII]{lang2013complex}.
%
The matrix representation of the composition map is
\begin{align*}
\boldsymbol{\mathrm{M}}_1
&= 
\begin{bmatrix}
A_1 A_2 - B_1 B_2 & A_2 B_1 - D_1 B_2 \\
B_1 D_2 -A_1B_2& D_1 D_2 -B_1B_2
\end{bmatrix}\\
&=
\begin{bmatrix}
A_2 & -B_2 \\
-B_2 & D_2 
\end{bmatrix}
\begin{bmatrix}
A_1 & B_1 \\ 
B_1 & D_1 
\end{bmatrix}=\text{det}(M_2)M_2^{-1}M_1
\end{align*}
Note that scaling by the determinant is unimportant since the the matrix representation does not change with scaling. 

Fixed points of M\"obius transformations can be characterized in terms of the eigenvectors of their matrix representations. 
For a scalar LFT with matrix representation $\mathbf{M}_1$ there are two fixed points each characterized by an eigenvector of $\mathbf{M}_1$. 
Specifically, suppose {\small $[1 \ v]^\top$} is a right eigenvector of $\mathbf{M}_1=[m_{11} \ m_{12}; m_{21}\ m_{22}]$. Then we have 
\begin{align*}
\begin{bmatrix} m_{11} & m_{12} \\ m_{21} & m_{22} \end{bmatrix} \begin{bmatrix} 1 \\ v \end{bmatrix} = 
\begin{bmatrix} m_{11} + m_{12}v \\ m_{21} + m_{22} v \end{bmatrix} =
\begin{bmatrix} 1 \\ v \end{bmatrix} \lambda
\end{align*}
and it follows that $v = (m_{21} + m_{22} v)/( m_{11} + m_{12}v)$, ie. $v$ is a fixed point for the LFT.  Further analysis shows that the stability of each fixed point depends on the ratio of the two eigenvalues with one stable and one unstable fixed point or two marginally stable fixed points \cite{lang2013complex}.  

In order to extend this analysis to matrix LFTs, we use tools from algebraic Riccati equation analysis. Yet, at its core the idea is very much the same as in the scalar case in that the eigenstructure of $\mathbf{M}_1$ tells us everything about the stability properties of CCVE as we will see in the coming sections.

\section{LFT Dynamics: Matrix Form}
In this section, we study the composite LFT dynamics; fixed points of these dynamics define the conjecture parameters $(L_1,L_2)$ in a CCVE $(\x{1},\x{2})$. Recall that given $(L_1,L_2)$, the affine terms $(\ell_1,\ell_2)$ follow immediately from \eqref{eq:L2fromL1}. Given $(L_i, \ell_i)$ for $i=1,2$, we can easily recover $(\x{1},\x{2})$ by solving the linear equations $\{\x{-i}=L_i\x{i}+\ell_i,  \ \ i=1,2\}$.

Define the blocks of the product matrices $\mathbf{M}_1 = M_2^{-\top}M_1$  and $\mathbf{M}_2 = M_1^{-\top}M_2$ as follows:
\begin{align*}
\mathbf{M}_1  = 
\begin{bmatrix} 
\mathbf{A}_1 & \mathbf{B}_1 \\ \mathbf{C}_1 & \mathbf{D}_1 
\end{bmatrix}, \ \ \text{and}\ \ 
\mathbf{M}_2 = 
\begin{bmatrix} 
\mathbf{D}_2 & \mathbf{C}_2 \\ \mathbf{B}_2 & \mathbf{A}_2
\end{bmatrix}.
\end{align*}


\begin{theorem}
\label{thm:invupdate}
The composite LFT update in \eqref{eq:update1} can be written in the compact form
\begin{equation}
    \label{eq:main_composite_dynamics}
    L_i^+ = \big(\mathbf{C}_i+\mathbf{D}_iL_i\big)\big(\mathbf{A}_i+\mathbf{B}_iL_i\big)^{-1}.
\end{equation}
\end{theorem}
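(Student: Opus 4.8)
The plan is to recognize that the compact action \eqref{eq:main_composite_dynamics} is exactly the statement that $\mathbf{M}_i$, applied to the homogeneous coordinate of the current conjecture, reproduces the homogeneous coordinate of the updated one. For player $1$ the natural coordinate is $\bmat{I \\ L_1}$ (encoding $x_2 = L_1 x_1$), while for player $2$ it is $\bmat{L_2 \\ I}$ (encoding $x_1 = L_2 x_2$); this asymmetry mirrors the differing block layouts of $M_1$ and $M_2$ in \eqref{eq:LFTmatrices}. Concretely, for $i=1$ I would show
\[
\mathbf{M}_1 \bmat{I \\ L_1} = \bmat{I \\ L_1^+}(\mathbf{A}_1 + \mathbf{B}_1 L_1),
\]
which, provided $\mathbf{A}_1 + \mathbf{B}_1 L_1$ is invertible, is equivalent to \eqref{eq:main_composite_dynamics} by reading off the two block rows; the case $i=2$ is identical after interchanging the player indices and using the coordinate $\bmat{L_2 \\ I}$, so it suffices to treat $i=1$.

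To prove the displayed identity I would first left-multiply by $M_2^\top$, which is invertible by Assumption~\ref{a:well-defined}, reducing the claim to $M_1 \bmat{I \\ L_1} = M_2^\top \bmat{I \\ L_1^+}(\mathbf{A}_1 + \mathbf{B}_1 L_1)$; that is, to showing the two $d\times d_1$ matrices $M_1 \bmat{I \\ L_1}$ and $M_2^\top \bmat{I \\ L_1^+}$ span the same column space. A direct block multiplication gives $M_1 \bmat{I \\ L_1} = \bmat{A_1 + B_1^\top L_1 \\ B_1 + D_1 L_1} = \bmat{I \\ \Phi}(A_1 + B_1^\top L_1)$, where $\Phi := (B_1 + D_1 L_1)(A_1 + B_1^\top L_1)^{-1}$ and the indicated inverse exists under the standing hypothesis that makes \eqref{eq:update1} well defined. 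Likewise $M_2^\top \bmat{I \\ L_1^+} = \bmat{D_2^\top + B_2 L_1^+ \\ B_2^\top + A_2^\top L_1^+}$, whose columns are spanned by $\bmat{I \\ \Phi}$ precisely when $\Phi(D_2^\top + B_2 L_1^+) = B_2^\top + A_2^\top L_1^+$.

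The crux is therefore to verify this single matrix equation. Rearranging it as $(A_2^\top - \Phi B_2)L_1^+ = \Phi D_2^\top - B_2^\top$ and solving yields $L_1^+ = -(A_2^\top - \Phi B_2)^{-1}(B_2^\top - \Phi D_2^\top)$, which upon substituting the definition of $\Phi$ is exactly the composite update \eqref{eq:update1}; running the computation in reverse establishes the column-space equality. A short rank count closes the remaining invertibility gap: since $M_2^\top$ is invertible and $\bmat{I \\ L_1^+}$ has full column rank $d_1$, so does $M_2^\top \bmat{I \\ L_1^+} = \bmat{I \\ \Phi}H$ with $H := D_2^\top + B_2 L_1^+$, forcing $H$ to be invertible; the top block of the projective identity then reads $\mathbf{A}_1 + \mathbf{B}_1 L_1 = H^{-1}(A_1 + B_1^\top L_1)$, which is invertible, and the bottom block delivers \eqref{eq:main_composite_dynamics}. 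I expect the main obstacle to be organizational rather than analytic—keeping the transpose placements and the two distinct homogeneous-coordinate conventions ($\bmat{I \\ L_1}$ versus $\bmat{L_2 \\ I}$) straight—since once the correct coordinate is fixed the verification collapses to the single rearrangement above.
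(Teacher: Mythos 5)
Your proof is correct, but it takes a genuinely different route from the paper's proof of Theorem~\ref{thm:invupdate}. The paper proceeds by brute force: it computes $M_2^{-\top}$ explicitly via a block (Schur-complement) inversion, applies the Woodbury identity to the inner inverse $\big[A_2^\top - (B_1+D_1L_1)(A_1+B_1^\top L_1)^{-1}B_2\big]^{-1}$ in \eqref{eq:update1}, and grinds through several cancellations until the result matches the blocks $\mathbf{A}_1,\mathbf{B}_1,\mathbf{C}_1,\mathbf{D}_1$ of $M_2^{-\top}M_1$. You instead work in homogeneous coordinates and clear the inverse by left-multiplying by $M_2^\top$, reducing the whole theorem to the single linear identity $\Phi(D_2^\top + B_2L_1^+) = B_2^\top + A_2^\top L_1^+$ with $\Phi = (B_1+D_1L_1)(A_1+B_1^\top L_1)^{-1}$, which is exactly the rearranged form of \eqref{eq:update1} (note $\Phi = -L_2^\top$ for the intermediate conjecture of \eqref{eq:L2fromL1}, so your identity is the one-step analogue of \eqref{eq:rearrange1}). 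This is essentially the idea the paper itself uses only \emph{at equilibrium} in its ``Alternative Computation'' (Proposition~\ref{prop:alt} in Section~\ref{sec:alt}, where $M_1K_1 = M_2^\top K_1\Lambda$ is rearranged the same way); you have promoted it to the full dynamics, which also directly yields the identity \eqref{eq:invcomp2} that the paper only records as a remark after the theorem. What each approach buys: the paper's computation produces explicit formulas for the blocks of $\mathbf{M}_1$ in terms of $E$, $F$, and the Schur complement $G = D_2^\top - B_2A_2^{-\top}B_2^\top$ (used later), but it implicitly requires $A_2^\top$ and $G$ invertible for the chosen block-inversion; your argument needs only Assumption~\ref{a:well-defined} plus the invertibility already demanded by \eqref{eq:update1}, and your closing rank count (full column rank of $M_2^\top\bmat{I \\ L_1^+}$ forces $H = D_2^\top + B_2L_1^+$ invertible, whence $\mathbf{A}_1+\mathbf{B}_1L_1 = H^{-1}(A_1+B_1^\top L_1)$ is invertible) correctly discharges the remaining invertibility obligations. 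Your version is shorter, cleaner, and slightly more parsimonious in hypotheses; the one small thing to make explicit is the $i=2$ bookkeeping, since $\mathbf{M}_2 = M_1^{-\top}M_2$ has the transposed block layout and the coordinate $\bmat{L_2 \\ I}$, but as you note this is a mechanical relabeling.
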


\begin{proof}
We show the proof for $i=1$ and $-i=2$ for clarity.
Expanding $\mathbf{M}_1=M_2^{-\top}M_1$ by using block matrix inversion on $M_2^{-\top}$
\begin{align*}
M_2^{-\top} = \bmat{
\big(D_2^\top - B_2A_2^{-\top} B_2^\top \big)^{-1}
& - \big(D_2^\top - B_2A_2^{-\top} B_2^\top \big)^{-1} B_2 A_2^{-\top} \\
- A_2^{-\top} B_2^\top \big(D_2^\top - B_2A_2^{-\top} B_2^\top \big)^{-1} & 
A_2^{-\top} + A_2^{-\top} B_2^\top 
\big(D_2^\top - B_2A_2^{-\top} B_2^\top \big)^{-1}
B_2 A_2^{-\top}
}
\end{align*}
we deduce that
\begin{align}
\mathbf{M}_1 = M_2^{-\top }M_1 
= \begin{bmatrix}
G^{-1}E & G^{-1}F  \\
A_2^{-\top}(B_1 -B_2^\top G^{-1} E)
& 
A_2^{-\top}(D_1 
- B_2^\top G^{-1}F)
\end{bmatrix},
\label{eq:M1expand}
\end{align}
%
with 
\begin{align*}
G = D_2^\top - B_2 A_2^{-\top}B_2^\top, \qquad 
E = A_1 - B_2 A_2^{-\top}B_1, \qquad 
F = B_1^\top - B_2 A_2^{-\top} D_1.
\end{align*}
We have specifically chosen a block matrix inversion that requires $A_2^\top$ and 
$G$ to be invertible, yet  does not explicitly require $D_2$ to be invertible ---in many practical cases it will not be. 
Proceeding from the update \eqref{eq:update1}, we have that
\begin{align*}
    L_1^+
&= - \left[A_{2}^\top - (B_1 +D_1L_1)(A_1 + B_1^\top  L_1)^{-1}B_{2}\right]^{-1} 
\Big(B_{2}^\top  - (B_1 +D_1L_1)(A_1 + B_1^\top  L_1)^{-1} D_{2}^\top \Big).
\end{align*}
Applying the Woodbury matrix identity to the inverse 
\begin{align*}
\big[\cdot \big]^{-1} 
& = \left[A_{2}^\top - (B_1 +D_1L_1)(A_1 + B_1^\top  L_1)^{-1}B_{2}\right]^{-1} \\
& =
A_2^{-\top} + A_2^{-\top}
\big(B_1 +D_1L_1\big)\Big(
A_1 + B_1^\top  L_1 
- B_2 A_2^{-\top} \big(B_1 + D_1 L_1 \big)
\Big)^{-1}B_2 A_2^{-\top} \\
& =
A_2^{-\top} + A_2^{-\top}
\big(B_1 +D_1L_1\big)\Big(
E + FL_1
\Big)^{-1}B_2 A_2^{-\top}
\end{align*}
and collecting terms,
we deduce that
\begin{equation*}
   \begin{aligned}
    L_1^+&=- A_{2}^{-\top}
B_{2}^\top  + A_{2}^{-\top} (B_1 +D_1L_1)(A_1 + B_1^\top  L_1)^{-1} D_{2}^\top   \\
&  
- A_{2}^{-\top}
(B_1 +D_1L_1)  ( E + F L_1)^{-1} \Big[B_2 A_2^{-\top}B_{2}^\top  
- B_2 A_2^{-\top}  (B_1 +D_1L_1)(A_1 + B_1^\top  L_1)^{-1} D_{2}^\top \Big]
\end{aligned} 
\end{equation*}
After some algebraic manipulation, we have that the last multiplicative term in $[\cdot]$ satisfies
\begin{equation*}
    \begin{aligned}
    \big[\cdot\big] & = B_2 A_2^{-\top} B_2^\top - B_2 A_2^{-\top} \big(B_1 +D_1L_1\big)\big(A_1 + B_1^\top  L_1\big)^{-1} D_{2}^\top\\
    & =
 -G + D_2^\top 
 +\big(- B_2 A_2^{-\top}B_1 - B_2 A_2^{-\top}D_1 L_1 \big)\big(A_1 + B_1^{\top}  L_1\big)^{-1} D_{2}^\top \\
 & = 
 -G 
 +\big(A_1 + B_1^{\top}  L_1 - B_2 A_2^{-\top}B_1 - B_2 A_2^{-\top}D_1 L_1 \big)\big(A_1 + B_1^{\top}  L_1\big)^{-1} D_{2}^\top \\
& = 
-G + (E + F L_1 ) (A_1 + B_1^\top  L_1)^{-1} D_2^\top.
    \end{aligned}
\end{equation*}
Substituting this into the expression for $L_1^+$ and simplifying gives 
\begin{align*}
    L_1^+&=- A_{2}^{-\top}
B_{2}^\top  + A_{2}^{-\top} (B_1 +D_1L_1)(A_1 + B_1^\top  L_1)^{-1} D_{2}^\top  \\
& \quad 
- A_{2}^{-\top}
(B_1 +D_1L_1)  ( E + F L_1 )^{-1} 
\left(-G + (E + F L_1) (A_1 + B_1^\top  L_1)^{-1} D_2^\top\right). \\
&=- A_{2}^{-\top}
B_{2}^\top  + A_{2}^{-\top} (B_1 +D_1L_1)(A_1 + B_1^\top  L_1)^{-1} D_{2}^\top  \\
& \quad 
+ A_{2}^{-\top}
(B_1 +D_1L_1)  ( E + F L_1 )^{-1} G
-A_{2}^{-\top}
(B_1 +D_1L_1) (A_1 + B_1^\top  L_1)^{-1} D_2^\top \\
&=- A_{2}^{-\top}
B_{2}^\top  
+ A_{2}^{-\top}
(B_1 +D_1L_1)  ( E + F L_1 )^{-1} G
\end{align*}
Arranging further and finally comparing with \eqref{eq:M1expand} gives 
\begin{equation*}
 \begin{aligned}
    L_1^+
&=\Big(A_2^{-\top}\big(B_1+D_1L_1\big)-A_{2}^{-\top}B_2^\top G^{-1}\big(E+FL_1\big)\Big)\big(E+FL_1\big)^{-1}G\\
&=\Big(A_2^{-\top}\big(B_1-B_2^\top G^{-1}E \big) +
A_2^{-\top}\big(D_1-B_2^\top G^{-1}F\big)L_1 \Big)
\Big(G^{-1}E+G^{-1}FL_1\Big)^{-1}\\
&=(\mathbf{C}_1 + \mathbf{D}_1L_1 )(\mathbf{A}_1 + \mathbf{B}_1L_1)^{-1}, 
\end{aligned}   
\end{equation*}
which concludes the proof.
\end{proof}
We note that this update can be written as
\begin{align}
\begin{bmatrix} I \\ L_1^+ \end{bmatrix} 
= 
\mathbf{M}_1
\begin{bmatrix} I \\ L_1 \end{bmatrix}
\Big[ \mathbf{A}_1+\mathbf{B}_1L_1 \Big]^{-1}
\label{eq:invcomp2}
\end{align}
Iterating from an initial conjecture $L_1(0)$ gives 
\begin{align}
\begin{bmatrix} I \\ L_1(1) \end{bmatrix} 
=
\mathbf{M}_1
\begin{bmatrix} I \\ L_1(0) \end{bmatrix}
\Big[ \mathbf{A}_1+\mathbf{B}_1L_1(0) \Big]^{-1} 
\label{eq:iter1}
\end{align}
Iterating again and then plugging in \eqref{eq:iter1} for the $(*)$ term gives 
\begin{align*}
\begin{bmatrix} I \\ L_1(2) \end{bmatrix} 
= 
\mathbf{M}_1
\underbrace{
\begin{bmatrix} I \\ L_1(1) \end{bmatrix}}_{(*)}
\Big[ \mathbf{A}_1+\mathbf{B}_1L_1(1) \Big]^{-1}
= 
\mathbf{M}_1
\cdot 
\mathbf{M}_1
\begin{bmatrix} I \\ L_1(0) \end{bmatrix}
\Big[ \mathbf{A}_1+\mathbf{B}_1L_1(0) \Big]^{-1}
\Big[ \mathbf{A}_1+\mathbf{B}_1L_1(1) \Big]^{-1}
\end{align*}
Continuing the iteration process for $k$ steps leads to
\begin{align}
\begin{bmatrix} I \\ L_1(k) \end{bmatrix} 
& 
=
\begin{bmatrix} 
\mathbf{A}_1 & \mathbf{B}_1 \\ \mathbf{C}_1 & \mathbf{D}_1 
\end{bmatrix}^k
\begin{bmatrix} I \\ L_1(0) \end{bmatrix}\Pi_{t=0}^{k-1}(\mathbf{A}_1+\mathbf{B}_1L_1(t))^{-1}
\end{align}


 In some sense the evolution of $L_1(k)$ is given by repeated application of $\mathbf{M}_1$ as in a discrete time linear system.  However, the right multiplication by $\Pi_{t=0}^{k-1}(\mathbf{A}_1+\mathbf{B}_1L_1(t))^{-1}$
makes the evolution nonlinear and more complicated.  Some features of the evolution of linear systems do apply, however.  Specifically if $[I; L_1(0)]$
initially spans an  $\mathbf{M}_1$--invariant subspace, then $[I; L_1(k)]$
will remain within that subspace as well for all $k$.  This fact is at the heart of the equilibrium analysis in the next section.

\section{Equilibrium Analysis via Invariant Subspaces}

We can find equilibrium points for the LFT dynamics using invariant subspaces.  
The following theorem defines fixed points of the composite LFT dynamics \eqref{eq:main_composite_dynamics} from which CCVE can be directly computed.
\begin{theorem}[Equilibrium Computation]
\label{thm:equilib}
Let 
$K_1 = [ Y_1; X_1] \in \mb{C}^{d\times d_1}$ where $Y_1\in \mb{C}^{d_1\times d_1}$ and $X_1\in \mb{C}^{d_2\times d_1}$ 
define an $\mathbf{M}_1$--invariant subspace where $Y_1$ is square and nonsingular.
It follows that $L_1 = X_1Y_1^{-1}\in \mb{C}^{d_2\times d_1}$ is fixed point of the composite LFT dynamics \eqref{eq:main_composite_dynamics}. A completely analogous statement holds for $L_2=X_2Y_2^{-1}$. 
\end{theorem}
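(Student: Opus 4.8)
The plan is to leverage the ``homogeneous'' form of the update already recorded in \eqref{eq:invcomp2}, namely $\bmat{I \\ L_1^+} = \mathbf{M}_1 \bmat{I \\ L_1}(\mathbf{A}_1 + \mathbf{B}_1 L_1)^{-1}$, and to recognize that $L_1$ being a fixed point is equivalent to the column span of $\bmat{I \\ L_1}$ being $\mathbf{M}_1$--invariant. First I would encode the hypothesis that $K_1 = \bmat{Y_1 \\ X_1}$ spans a $d_1$--dimensional $\mathbf{M}_1$--invariant subspace as the matrix identity $\mathbf{M}_1 K_1 = K_1 \Lambda$ for some $\Lambda \in \mb{C}^{d_1 \times d_1}$ (the matrix of the restriction of $\mathbf{M}_1$ to the subspace in the basis given by the columns of $K_1$).

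Since $Y_1$ is nonsingular, I would factor $K_1 = \bmat{I \\ L_1} Y_1$ with $L_1 = X_1 Y_1^{-1}$, and right-multiply the invariance identity by $Y_1^{-1}$ to obtain $\mathbf{M}_1 \bmat{I \\ L_1} = \bmat{I \\ L_1} \tilde{\Lambda}$, where $\tilde{\Lambda} = Y_1 \Lambda Y_1^{-1}$. Reading off the two block rows of this identity using the block partition $\mathbf{M}_1 = \bmat{\mathbf{A}_1 & \mathbf{B}_1 \\ \mathbf{C}_1 & \mathbf{D}_1}$ gives the top-block relation $\mathbf{A}_1 + \mathbf{B}_1 L_1 = \tilde{\Lambda}$ and the bottom-block relation $\mathbf{C}_1 + \mathbf{D}_1 L_1 = L_1 \tilde{\Lambda}$. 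Substituting the former into the latter yields $\mathbf{C}_1 + \mathbf{D}_1 L_1 = L_1(\mathbf{A}_1 + \mathbf{B}_1 L_1)$, and plugging this directly into the update \eqref{eq:main_composite_dynamics} produces $L_1^+ = L_1(\mathbf{A}_1 + \mathbf{B}_1 L_1)(\mathbf{A}_1 + \mathbf{B}_1 L_1)^{-1} = L_1$, which is exactly the claim.

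The main obstacle, and the only step requiring care, is verifying that $\mathbf{A}_1 + \mathbf{B}_1 L_1 = \tilde{\Lambda}$ is invertible, so that the update is well-defined and the cancellation above is legitimate. I would argue this from invertibility of $\mathbf{M}_1$: under Assumption \ref{a:well-defined} the matrices $M_1, M_2$ are invertible, hence so are $\mathbf{M}_1, \mathbf{M}_2$, and therefore $0 \notin \text{spec}(\mathbf{M}_1)$. Because the eigenvalues of $\Lambda$ (equivalently of $\tilde{\Lambda}$, to which it is similar) are a subset of $\text{spec}(\mathbf{M}_1)$ — they are precisely the eigenvalues of $\mathbf{M}_1$ associated with the chosen invariant subspace — it follows that $0 \notin \text{spec}(\tilde{\Lambda})$, so $\tilde{\Lambda}$ and hence $\mathbf{A}_1 + \mathbf{B}_1 L_1$ is nonsingular.

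Finally, the statement for $L_2 = X_2 Y_2^{-1}$ follows by the identical argument applied to $\mathbf{M}_2$ together with its block labeling $\mathbf{M}_2 = \bmat{\mathbf{D}_2 & \mathbf{C}_2 \\ \mathbf{B}_2 & \mathbf{A}_2}$; the only bookkeeping difference is the placement of the blocks, and the cancellation $L_2^+ = L_2$ goes through verbatim once the analogous invertibility of $\mathbf{A}_2 + \mathbf{B}_2 L_2$ is secured from the invertibility of $\mathbf{M}_2$. I expect the entire proof to be short, with the conceptual content concentrated in the passage from subspace invariance to the two block relations and the accompanying nonsingularity check.
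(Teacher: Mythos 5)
Your proposal is correct and follows essentially the same route as the paper's own proof: encode invariance as $\mathbf{M}_1 K_1 = K_1\Lambda$, right-multiply by $Y_1^{-1}$ to pass to $[I;\,L_1]$ with $\tilde{\Lambda} = Y_1\Lambda Y_1^{-1}$ (the paper's $\mathbf{H}_1$), read off the two block rows, and substitute into \eqref{eq:main_composite_dynamics}. Your nonsingularity check for $\mathbf{A}_1+\mathbf{B}_1L_1$ via $0\notin\spec(\mathbf{M}_1)$ is exactly the paper's argument (``since $\mathbf{M}_1$ is invertible, $\Lambda$ will be as well''), just spelled out in slightly more detail.
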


\begin{proof} 
Select the columns of $K_1$ to span a right-invariant subspace of $\mathbf{M}_1$, so that
$M_1^{-1}M_2^\top K_1 = K_1\Lambda$.   
In general, $K_1$ can be complex leading to complex conjectures. For problems with real parameters, however, $K_1$ can often be chosen to be real. Even if the invariant subspace contains conjugate pairs of eigenvectors, $K_1$ can be chosen to be a real basis with vectors spanning any planes of rotation and $\Lambda$ will simply be block diagonal as opposed to diagonal. The one exception to this is if the $\mathbf{M}_1$-invariant subspace contains only one complex eigenvector from a complex conjugate pair (see Remark \ref{rem:distinct} below).
Since $\mathbf{M}_1$ is invertible, the matrix $\Lambda$ will be as well. Hence we have that
\begin{align*}
\mathbf{M}_1
\begin{bmatrix}
Y_1 \\ X_1
\end{bmatrix}
=
\begin{bmatrix}
Y_1 \\ X_1
\end{bmatrix}
\Lambda
\implies
\begin{bmatrix}
\mathbf{A}_1 & \mathbf{B}_1 \\ \mathbf{C}_1 & \mathbf{D}_1 
\end{bmatrix}
\begin{bmatrix}
I \\ L_1 
\end{bmatrix}
=
\begin{bmatrix}
I \\ L_1
\end{bmatrix}
\mathbf{H}_1
\end{align*}
where we have right multiplied by $Y_1^{-1}$ and plugged in $L_1$ and $\mathbf{H}_1 = Y_1\Lambda Y_1^{-1}$. Note that $\mathbf{H}_1$ is invertible.

The top equation gives $\big(\mathbf{A}_1+\mathbf{B}_1L_1) = \mathbf{H}_1$.
Plugging this result into the bottom equation gives $\mathbf{C}_1 + \mathbf{D}_1L_1  = L_1(\mathbf{A}_1+\mathbf{B}_1L_1)$ which implies $L_1 = \big(\mathbf{C}_1 + \mathbf{D}_1L_1)(\mathbf{A}_1+\mathbf{B}_1L_1)^{-1}$.
This verifies that $L_1=X_1Y_{1}^{-1}$ is a fixed point of the dynamics as claimed which completes the proof. 
\end{proof}
We note that in the case where $Y_1$ is not invertible in the construction above, this method cannot be used and we leave analysis of this case to future work.  

While the choice of $\mathbf{M}_1$--invariant subspace matters for the computation of the equilibrium, the choice of basis for this space does not.  
\begin{proposition}[Invariance with respect to basis.]
\label{prop:basis}
Let  $K_1 = \big[  Y_1; \ X_1 \big]$ and $K_1' = \big[  Y_1'; \ X_1'\big]$ be two different bases for the same $\mathbf{M}_1$--invariant subspace with $Y_1,Y_1'$ square and non-singular. Then $L_1 = X_1Y_1^{-1} = X_1'Y_1'^{-1}$.
\end{proposition}
\begin{proof}
Since $K_1$ and $K_1'$ are bases for the same space, there exists square, non-singular $W$ such that $K' = KW$.  It follows that $X_1'Y_1'^{-1} = X_1WW^{-1}Y_1^{-1} = X_1Y_1^{-1}$, which completes the proof.
\end{proof}


\subsection{Alternative Computation}
\label{sec:alt}
The equilibrium solution can be derived from \eqref{eq:update1}  using an alternative method without initially showing that the composite LFT map is given by the formula in Theorem \ref{thm:invupdate}.
Since the analysis is more direct---and also provides inspiration for Theorem \ref{thm:invupdate} and a useful perspective for proofs later on---we reproduce it here.  
Expanding and rearranging \eqref{eq:update1} at equilibrium, we get that $A_2^\top L_1-  (B_1 +D_1L_1)(A_1 + B_1^\top L_1)^{-1} B_2L = - B_2^\top  + (B_1 +D_1L_1)(A_1 + B_1^\top L_1)^{-1}D_2^\top$ which implies 
\begin{equation}\label{eq:rearrange1}
  \begin{aligned}
A_2^\top  L_1+B_2^\top  &= (B_1 +D_1L_1)(A_1 + B_1^\top  L_1)^{-1}
 (D_2^\top  + B_2L_1). 
\end{aligned}  
\end{equation}
Using this form of the fixed point equations, we can solve for the equilibrium using a similar invariant subspace argument. 
\begin{proposition}[Alternative Equilibrium Computation]
\label{prop:alt}
Let the columns of $K_1 = \big[  Y_1^\top \ X_1^\top \big]^\top$ solve the generalized eigenvalue problem $M_1K_1 = M_2^\top K_1 \Lambda$. Then $L_1 = X_1Y_1^{-1}$ solves \eqref{eq:rearrange1}.  
\end{proposition}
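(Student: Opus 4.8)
The plan is to verify \eqref{eq:rearrange1} directly by expanding the generalized eigenvalue relation $M_1 K_1 = M_2^\top K_1 \Lambda$ into its two block rows and substituting $X_1 = L_1 Y_1$. Writing $K_1 = [Y_1^\top\ X_1^\top]^\top$ and using the block forms of $M_1$ from \eqref{eq:LFTmatrices} together with $M_2^\top = \begin{bmatrix} D_2^\top & B_2 \\ B_2^\top & A_2^\top \end{bmatrix}$, the relation becomes
\begin{align*}
A_1 Y_1 + B_1^\top X_1 &= (D_2^\top Y_1 + B_2 X_1)\Lambda, \\
B_1 Y_1 + D_1 X_1 &= (B_2^\top Y_1 + A_2^\top X_1)\Lambda.
\end{align*}
Since $Y_1$ is square and nonsingular, I would substitute $X_1 = L_1 Y_1$, factor $Y_1$ out on the right of every term, and right-multiply both rows by $Y_1^{-1}$. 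Setting $P := Y_1 \Lambda Y_1^{-1}$, this yields the pair of relations
\begin{align*}
A_1 + B_1^\top L_1 &= (D_2^\top + B_2 L_1)P, \\
B_1 + D_1 L_1 &= (B_2^\top + A_2^\top L_1)P.
\end{align*}

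The key observation is that $P$ is invertible. Because $\mathbf{M}_1 = M_2^{-\top}M_1$ is invertible under Assumption~\ref{a:well-defined} and $\mathbf{M}_1 K_1 = K_1\Lambda$, the generalized eigenvalues gathered in $\Lambda$ are all nonzero, so $\Lambda$ and hence $P = Y_1\Lambda Y_1^{-1}$ are nonsingular. Assuming, as in the derivation of \eqref{eq:rearrange1}, that $(A_1 + B_1^\top L_1)$ is invertible, the first relation can be rearranged to $P^{-1} = (A_1 + B_1^\top L_1)^{-1}(D_2^\top + B_2 L_1)$. The second relation gives $B_2^\top + A_2^\top L_1 = (B_1 + D_1 L_1)P^{-1}$. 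Substituting the expression for $P^{-1}$ into this identity produces
\begin{align*}
A_2^\top L_1 + B_2^\top = (B_1 + D_1 L_1)P^{-1} = (B_1 + D_1 L_1)(A_1 + B_1^\top L_1)^{-1}(D_2^\top + B_2 L_1),
\end{align*}
which is exactly \eqref{eq:rearrange1}; the analogous statement for $L_2 = X_2 Y_2^{-1}$ follows by the symmetric roles of the two players.

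The computation is short because the $P$ and $P^{-1}$ factors cancel cleanly, so I do not expect a genuine obstacle. The only points requiring care are bookkeeping ones: correctly transcribing the transposes in $M_2^\top$ so that the block rows line up with the ordering of $L_1 = X_1 Y_1^{-1}$, and the invertibility accounting—namely confirming that $\Lambda$ (hence $P$) is nonsingular and that $(A_1 + B_1^\top L_1)$ is invertible, the latter being precisely the standing hypothesis under which \eqref{eq:rearrange1} is well-defined. The payoff of this alternative route is that it reaches the fixed-point equation while entirely avoiding the block inversion of $M_2^{-\top}$ carried out in the proof of Theorem~\ref{thm:invupdate}.
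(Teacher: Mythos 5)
Your proof is correct and follows essentially the same route as the paper's: the paper likewise right-multiplies the generalized eigenvalue relation by $Y_1^{-1}$, substitutes $L_1 = X_1 Y_1^{-1}$ to obtain $M_1 [I;\,L_1] = M_2^\top [I;\,L_1]\mathbf{H}_1$ with $\mathbf{H}_1 = Y_1\Lambda Y_1^{-1}$ (your $P$), and eliminates $\mathbf{H}_1^{-1}$ between the two block rows to recover \eqref{eq:rearrange1}. Your explicit accounting for the nonsingularity of $\Lambda$ and of $(A_1 + B_1^\top L_1)$ merely spells out what the paper leaves implicit.
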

\begin{proof} The expression 
$M_1K_1  = M_2^\top K_1 \Lambda$ gives 
\begin{align}
\begin{bmatrix}
A_1 & B_1^\top  \\
B_1 & D_1 
\end{bmatrix}
\begin{bmatrix}
I \\ L_1
\end{bmatrix}
= 
\begin{bmatrix}
D_2^\top  & B_2 \\
B_2^\top  & A_2^\top  
\end{bmatrix}
\begin{bmatrix}
I \\ L_1
\end{bmatrix} \mathbf{H}_1
\label{eq:blab}
\end{align}
where $\mathbf{H}_1= Y_1 \Lambda Y_1^{-1}$. This expression arises since we have right multiplied by $Y_1^{-1}$ and plugged in $L_1=X_1Y_1^{-1}$.  Again, since $\mathbf{M}_1$ is non-singular, $\mathbf{H}_1$ will be as well.
The top and bottom equation, respectively, can be rearranged to deduce that {\small $( A_1+B_1^\top L_1)^{-1}( D_2^\top+B_2 L_1)  = \mathbf{H}_1^{-1}$} so that
$(B_1 + D_1 L_1)\mathbf{H}_1^{-1}  =(B_2^\top + A_2^\top L_1 )$.
Plugging in $\mathbf{H}_1^{-1}$ leads to \eqref{eq:rearrange1}, which concludes the proof.
\end{proof}

Inspiration for the the composite dynamics can then be seen by noting that for invertible $M_2$,  we see that 
$M_1K_1 = M_2^\top K_1 \Lambda
\ \iff \ M_2^{-\top} M_1K_1 = K_1 \Lambda.$

At first pass, there are many ways to choose an $\mathbf{M}_1$--invariant subspace to compute $L_1$.  Explicitly, there are 
$d$ choose $d_1$ 
ways to select a basis of eigenvectors.  A further stability analysis, however, shows that there is only one way to select an invariant subspace that leads to a stable $L_1$ when the eigenvalues of $\mathbf{M}_1$ have distinct magnitudes.  This analysis is given in Section \ref{sec:stability}.

\section{Equilibrium Stability}
\label{sec:stability}
We next characterize the stability properties of fixed points of \eqref{eq:coupledriccati}---which includes the set of CCVE---and show how stability is related to the matrices $\mathbf{M}_i$, $i=1,2$.

The local stability  of a nonlinear system can be characterized by examining the eigenstructure of the local linearization; in particular, by the Hartman-Grobman theorem, if the eigenvalues of the local linearization evaluated at a fixed point of the nonlinear system have modulus less than one, then the fixed point is a locally asymptotically stable equilibrium of the nonlinear system.
\begin{theorem}[Perturbation Dynamics]
The linearized perturbation dynamics at equilibrium are
$
\Delta L_i^+   = \mathbf{\Omega}_i(\Delta L_i)  = (\mathbf{D}_i  -  L_i \mathbf{B}_i) \Delta L_i (\mathbf{A}_i + \mathbf{B}_iL_i)^{-1}$. 
\end{theorem}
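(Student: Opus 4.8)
The plan is to linearize the closed-form composite update of Theorem~\ref{thm:invupdate}, namely $L_i^+ = (\mathbf{C}_i + \mathbf{D}_i L_i)(\mathbf{A}_i + \mathbf{B}_i L_i)^{-1}$, by computing its first-order (Fr\'echet) variation about a fixed point and discarding terms of order $\|\Delta L_i\|^2$ and higher. I would write $N(L_i) = \mathbf{C}_i + \mathbf{D}_i L_i$ and $P(L_i) = \mathbf{A}_i + \mathbf{B}_i L_i$, so that the update is the matrix-rational map $L_i^+ = N(L_i)P(L_i)^{-1}$, and then treat $\Delta L_i$ as an infinitesimal perturbation $L_i \mapsto L_i + \Delta L_i$.

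First I would apply the product rule together with the standard identity for the differential of a matrix inverse, $\mathrm{d}(P^{-1}) = -P^{-1}(\mathrm{d}P)P^{-1}$, taking care with noncommutativity. Since $\mathrm{d}N = \mathbf{D}_i \Delta L_i$ and $\mathrm{d}P = \mathbf{B}_i \Delta L_i$, this yields
\begin{equation*}
\Delta L_i^+ = \mathbf{D}_i \Delta L_i P^{-1} - N P^{-1} \mathbf{B}_i \Delta L_i P^{-1}
\end{equation*}
to first order. The crucial simplification is then to invoke the fixed-point relation itself: at an equilibrium $L_i$ of \eqref{eq:main_composite_dynamics} we have exactly $N P^{-1} = (\mathbf{C}_i + \mathbf{D}_i L_i)(\mathbf{A}_i + \mathbf{B}_i L_i)^{-1} = L_i$, so the factor $N P^{-1}$ in the second term collapses to $L_i$. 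Factoring out $\Delta L_i (\mathbf{A}_i + \mathbf{B}_i L_i)^{-1}$ on the right then gives $\Delta L_i^+ = (\mathbf{D}_i - L_i \mathbf{B}_i)\,\Delta L_i\,(\mathbf{A}_i + \mathbf{B}_i L_i)^{-1}$, which is precisely the claimed operator $\mathbf{\Omega}_i$.

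The computation is essentially routine, so the main point is careful bookkeeping rather than a genuine obstacle: one must preserve the left and right matrix factors in the correct order throughout, since the variations do not commute with the coefficient blocks, and one must remember that the collapse $N P^{-1} = L_i$ holds only because we expand about an equilibrium---away from a fixed point that term would not simplify. I would also note explicitly that the resulting map $\Delta L_i \mapsto \Delta L_i^+$ is linear in $\Delta L_i$, so that $\mathbf{\Omega}_i$ is a well-defined linear (Sylvester-type) operator whose spectrum, governed by the eigenvalues of $\mathbf{D}_i - L_i \mathbf{B}_i$ and of $(\mathbf{A}_i + \mathbf{B}_i L_i)^{-1}$, is exactly the object the subsequent Hartman--Grobman stability argument requires.
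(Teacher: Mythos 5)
Your proposal is correct and follows essentially the same route as the paper: both linearize the composite update $L_i^+ = (\mathbf{C}_i+\mathbf{D}_iL_i)(\mathbf{A}_i+\mathbf{B}_iL_i)^{-1}$ about the fixed point and invoke the equilibrium relation $L_i(\mathbf{A}_i+\mathbf{B}_iL_i) = \mathbf{C}_i+\mathbf{D}_iL_i$ to collapse the cross term. The only cosmetic difference is bookkeeping: the paper first derives the exact perturbation identity $\Delta L_i^+ = (\mathbf{D}_i - L_i\mathbf{B}_i)\Delta L_i(\mathbf{A}_i+\mathbf{B}_iL_i+\mathbf{B}_i\Delta L_i)^{-1}$ and then expands the perturbed inverse via the Woodbury identity before dropping higher-order terms, whereas you apply the differential $\mathrm{d}(P^{-1}) = -P^{-1}(\mathrm{d}P)P^{-1}$ from the outset---both yield the same linear operator $\mathbf{\Omega}_i$.
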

\begin{proof}
 Perturbing the equilibrium conjectures gives $L_i^+ + \Delta L_i^+= (\mathbf{C}_i + \mathbf{D}_i L_i + \mathbf{D}_i \Delta L_i )
 (\mathbf{A}_i + \mathbf{B}_iL_i+\mathbf{B}_i\Delta L_i)^{-1}$.
 At equilibrium $L_i = L_i^+$, we have
 that $(L_i + \Delta L_i^+)
(\mathbf{A}_i + \mathbf{B}_iL_i+\mathbf{B}_i \Delta L_i)=(\mathbf{C}_i + \mathbf{D}_i L_i + \mathbf{D}_i \Delta L_i )$.
Recall that in equilibrium $L_i(\mathbf{A}_i + \mathbf{B}_iL_i) - (\mathbf{C}_i + \mathbf{D}_i L_i) = 0$.
Therefore,
 we deduce that 
  $\Delta L_i^+  =  (\mathbf{D}_i  -  L_i \mathbf{B}_i ) \Delta L_i 
(\mathbf{A}_i+ \mathbf{B}_i L_i +\mathbf{B}_i \Delta L_i  )^{-1}$.
Applying the Woodbury matrix identity to the inverse and noting limits we further deduce that
 \begin{align*}
\Delta L_i^+  & =  \big(\mathbf{D}_i  -  L_i \mathbf{B}_i ) \Delta L_i (\mathbf{A}_i + \mathbf{B}_iL_i)^{-1}  - 
 (\mathbf{D}_i  -  L_i \mathbf{B}_i ) \Delta L_i 
(\mathbf{A}_i + \mathbf{B}_iL_i)^{-1}\mathbf{B}_i \\
&\qquad \cdot 
\left(
I + \Delta L_i (\mathbf{A}_i + \mathbf{B}_iL_i)^{-1}\mathbf{B}_i \right)^{-1}\Delta L_i(\mathbf{A}_i + \mathbf{B}_iL_i)^{-1}.
 \end{align*}
Noting that 
$
\left(
I + \Delta L_i (\mathbf{A}_i + \mathbf{B}_iL_i)^{-1}\mathbf{B}_i \right)^{-1} \to I
$ 
as $\Delta L_i \to 0$ and then dropping higher order terms completes the proof.
\end{proof}
Note that $\mathbf{\Omega}_i(\cdot)$ for $i=1,2$ are linear operators in the form of a discrete time Lyapunov equation.  To understand their stability, we recall a result from discrete time Lyapunov theory given here without proof.

\begin{lemma}[DT Lyapunov Operators]
\label{lem:dtlyap2}
For $A,B \in \mathbb{C}^{n \times n}$, the linear operator $\mathcal{A}(X) = AXB$ 
has eigenvalues of the form $\lambda_j\mu_k$ where $\lambda_j \in \text{spec}(A)$ and $\mu_k \in \text{spec}(B)$.  
\end{lemma}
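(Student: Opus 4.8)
The plan is to recognize $\mathcal{A}(X) = AXB$ as an ordinary linear map on the $n^2$-dimensional space $\mathbb{C}^{n\times n}$, represent it as a concrete matrix via vectorization, and then read off its spectrum from a Kronecker-product structure. This reduces the lemma to two standard facts: the vectorization identity and the eigenvalue rule for Kronecker products.

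First I would fix the column-stacking isomorphism $\mathrm{vec}:\mathbb{C}^{n\times n}\to\mathbb{C}^{n^2}$ and invoke the identity $\mathrm{vec}(AXB) = (B^\top\otimes A)\,\mathrm{vec}(X)$. This shows that, in $\mathrm{vec}$ coordinates, $\mathcal{A}$ is represented by the matrix $B^\top\otimes A$, so $\mathrm{spec}(\mathcal{A}) = \mathrm{spec}(B^\top\otimes A)$ as multisets, and it suffices to compute the spectrum of the Kronecker product.

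Second I would compute that spectrum by Schur-triangularizing both factors: write $A = U T_A U^*$ and $B^\top = V T_B V^*$ with $U,V$ unitary and $T_A,T_B$ upper triangular carrying the eigenvalues $\{\lambda_j\}$ and $\{\mu_k\}$ (recall $\mathrm{spec}(B^\top)=\mathrm{spec}(B)$) on their diagonals. The mixed-product property gives $(V\otimes U)(T_B\otimes T_A)(V\otimes U)^* = (V T_B V^*)\otimes(U T_A U^*) = B^\top\otimes A$, and since $V\otimes U$ is unitary this reduces the problem to $\mathrm{spec}(T_B\otimes T_A)$. A Kronecker product of upper-triangular matrices is again upper triangular, so its eigenvalues are its diagonal entries, which are exactly the products $\mu_k\lambda_j$. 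Hence $\mathrm{spec}(\mathcal{A}) = \{\lambda_j\mu_k : \lambda_j\in\mathrm{spec}(A),\ \mu_k\in\mathrm{spec}(B)\}$, as claimed.

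For intuition, and as an alternative that sidesteps vectorization, I would observe that if $Au=\lambda u$ and $w^\top B = \mu w^\top$ then the rank-one matrix $X = u w^\top$ satisfies $\mathcal{A}(X) = A u w^\top B = \lambda\mu\,u w^\top = \lambda\mu X$, exhibiting $\lambda\mu$ as an eigenvalue directly. The only delicate point in either route is confirming that the products $\lambda_j\mu_k$ exhaust the spectrum with the correct algebraic multiplicities, rather than merely producing some eigenvalues; I expect this multiplicity bookkeeping to be the main obstacle. The Schur-triangularization argument is what resolves it uniformly, since the triangular representation accounts for repeated eigenvalues and nontrivial Jordan structure automatically, whereas the rank-one construction would require an extra linear-independence count that becomes awkward when eigenvalues coincide.
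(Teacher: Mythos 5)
The paper states this lemma explicitly \emph{without} proof---it is recalled as a standard fact from discrete-time Lyapunov theory---so there is no in-paper argument to compare against. Your proof is correct and complete: the vectorization identity $\mathrm{vec}(AXB)=(B^\top\otimes A)\,\mathrm{vec}(X)$ correctly identifies the matrix representation of $\mathcal{A}$, and the Schur-triangularization step is exactly the right device for showing that the products $\lambda_j\mu_k$ exhaust $\spec(\mathcal{A})$ with the correct algebraic multiplicities, which the rank-one construction $X=uw^\top$ alone cannot guarantee when eigenvalues repeat or the factors are defective. Your explicit acknowledgment of this multiplicity issue, and your use of $\spec(B^\top)=\spec(B)$ to reconcile the transpose in the Kronecker factor with the statement's $\mu_k\in\spec(B)$, are precisely the details a careful write-up of this standard result requires.
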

The following characterization of the spectra of $\mathbf{\Omega}_i(\cdot)$ then follows immediately. 
\begin{theorem}
\label{specratio}
The spectrum of the linear operator $\mathbf{\Omega}_i(\cdot)$ is given by
 \[\spec(\mathbf{\Omega}_i ) = \left\{ \frac{\lambda_j}{\mu_k} \big| \lambda_j \in \spec(\mathbf{D}_i-L_i\mathbf{B}_i), \ \mu_k\in \spec(\mathbf{A}_i+\mathbf{B}_iL_i)\right\}.\] 
\end{theorem}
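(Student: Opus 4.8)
The plan is to recognize $\mathbf{\Omega}_i$ as a two-sided multiplication operator of the form $X \mapsto AXB$ and then invoke Lemma~\ref{lem:dtlyap2}. Reading off the Perturbation Dynamics theorem, I would set $A = \mathbf{D}_i - L_i\mathbf{B}_i$ and $B = (\mathbf{A}_i + \mathbf{B}_iL_i)^{-1}$, so that $\mathbf{\Omega}_i(\Delta L_i) = A\,\Delta L_i\,B$ is a linear operator on the space of perturbations $\Delta L_i$. The whole statement then reduces to computing the spectra of the two factors and combining them.

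First I would record the spectrum of each factor. The left factor $A = \mathbf{D}_i - L_i\mathbf{B}_i$ contributes its eigenvalues $\{\lambda_j\}$ directly. For the right factor, I would use that the eigenvalues of a matrix inverse are the reciprocals of the eigenvalues of the matrix, giving $\spec\big((\mathbf{A}_i + \mathbf{B}_iL_i)^{-1}\big) = \{1/\mu_k : \mu_k \in \spec(\mathbf{A}_i + \mathbf{B}_iL_i)\}$. This reciprocal step is valid precisely because $\mathbf{A}_i + \mathbf{B}_iL_i$ is nonsingular at an equilibrium: in the computation of Theorem~\ref{thm:equilib} we have $\mathbf{A}_i + \mathbf{B}_iL_i = \mathbf{H}_i$, which was shown there to be invertible, so every $\mu_k \neq 0$ and the reciprocals are well defined. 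Applying Lemma~\ref{lem:dtlyap2} then yields $\spec(\mathbf{\Omega}_i) = \{\lambda_j \cdot (1/\mu_k)\} = \{\lambda_j/\mu_k\}$, which is exactly the claimed set.

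The one point requiring care—and the main (minor) obstacle—is a dimension mismatch: here $A \in \mathbb{C}^{d_{-i}\times d_{-i}}$ while $B \in \mathbb{C}^{d_i\times d_i}$, and these sizes generally differ, whereas Lemma~\ref{lem:dtlyap2} is stated for two square matrices of a common size $n$. I would therefore note that the lemma extends verbatim to the rectangular setting $\Delta L_i \in \mathbb{C}^{d_{-i}\times d_i}$: vectorizing gives $\mathrm{vec}\big(\mathbf{\Omega}_i(\Delta L_i)\big) = (B^\top \otimes A)\,\mathrm{vec}(\Delta L_i)$, and the spectrum of the Kronecker product $B^\top \otimes A$ is the set of pairwise products of $\spec(B^\top) = \spec(B)$ with $\spec(A)$, regardless of whether $A$ and $B$ share a common size. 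With this remark in place, the product-of-spectra conclusion of Lemma~\ref{lem:dtlyap2} applies and the computation above goes through unchanged, completing the proof.
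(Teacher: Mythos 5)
Your proof is correct and takes essentially the same route as the paper, which obtains Theorem~\ref{specratio} as an immediate consequence of Lemma~\ref{lem:dtlyap2} applied to $\mathbf{\Omega}_i(\Delta L_i)=(\mathbf{D}_i-L_i\mathbf{B}_i)\,\Delta L_i\,(\mathbf{A}_i+\mathbf{B}_iL_i)^{-1}$, with the reciprocal-spectrum step for the inverse factor. Your two added observations---that $\mathbf{A}_i+\mathbf{B}_iL_i=\mathbf{H}_i$ is invertible so the reciprocals $1/\mu_k$ are well defined, and that the lemma (stated for a common size $n$) extends to the rectangular case $\Delta L_i\in\mathbb{C}^{d_{-i}\times d_i}$ via $\mathrm{vec}\big(\mathbf{\Omega}_i(\Delta L_i)\big)=\big((\mathbf{A}_i+\mathbf{B}_iL_i)^{-\top}\otimes(\mathbf{D}_i-L_i\mathbf{B}_i)\big)\mathrm{vec}(\Delta L_i)$---carefully fill in details the paper leaves implicit.
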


The next theorem establishes equivalent conditions for local stability. 
\begin{theorem}
\label{thm:uniquestable}
Given a fixed point $(L_1^{\tt c},L_2^{\tt c})$ of \eqref{eq:update1}, without loss of generality, the following are equivalent statements:
\begin{enumerate}
    \item[a.]  The fixed point $(L_1^{\tt c},L_2^{\tt c})$ is locally asymptotically stable with respect to \eqref{eq:update1} for $i=1,2$;
    \item[b.] The eigenvalues $\xi_j\in \spec(\mathbf{\Omega}_1(L_1^{\tt c}))$ as such that $|\xi_j|<1$ for all $j$.
    \item[c.]  The matrix $K_1\in \mb{C}^{d\times d_1}$ from Theorem \ref{thm:equilib} (and Proposition \ref{prop:alt}) is chosen to span an  $\mathbf{M}_1$--invariant subspace constructed using the $d_1$ largest magnitude eigenvalues;
\end{enumerate}
\end{theorem}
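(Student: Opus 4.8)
The plan is to prove the chain (c) $\Leftrightarrow$ (b) $\Leftrightarrow$ (a), treating player $1$ explicitly and invoking symmetry for player $2$. The engine is a spectral decomposition of $\mathbf{M}_1$ induced by the fixed point, which identifies the spectrum of the perturbation operator $\mathbf{\Omega}_1$ from Theorem \ref{specratio} with ratios of eigenvalues of $\mathbf{M}_1$, split according to whether they lie in the chosen invariant subspace $K_1$ or in its complement.

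First I would establish the key block-triangularization. Conjugating $\mathbf{M}_1$ by $T = \bmat{I & 0\\ L_1 & I}$ and using the fixed-point identity $\mathbf{C}_1 + \mathbf{D}_1 L_1 = L_1(\mathbf{A}_1 + \mathbf{B}_1 L_1)$ (from the proof of Theorem \ref{thm:equilib}) to annihilate the lower-left block yields
\[
T^{-1}\mathbf{M}_1 T = \bmat{\mathbf{A}_1 + \mathbf{B}_1 L_1 & \mathbf{B}_1\\ 0 & \mathbf{D}_1 - L_1 \mathbf{B}_1}.
\]
Since similarity preserves the spectrum, $\spec(\mathbf{M}_1) = \spec(\mathbf{A}_1 + \mathbf{B}_1 L_1)\,\sqcup\,\spec(\mathbf{D}_1 - L_1 \mathbf{B}_1)$. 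By Theorem \ref{thm:equilib}, $\mathbf{A}_1 + \mathbf{B}_1 L_1 = \mathbf{H}_1 = Y_1 \Lambda Y_1^{-1}$, so the first block carries exactly the $d_1$ eigenvalues of $\mathbf{M}_1$ spanned by $K_1$ (the chosen eigenvalues), and the second block $\mathbf{D}_1 - L_1 \mathbf{B}_1$ carries precisely the $d_2$ complementary eigenvalues. I expect this identity to be the main obstacle---not because it is computationally hard, but because it is the nonobvious structural fact that ties the perturbation spectrum to the eigenvalue-selection criterion; everything else is bookkeeping.

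With this in hand, (b) $\Leftrightarrow$ (c) is immediate: by Theorem \ref{specratio}, $\spec(\mathbf{\Omega}_1) = \{\lambda_j/\mu_k : \lambda_j \in \spec(\mathbf{D}_1 - L_1 \mathbf{B}_1),\ \mu_k \in \spec(\mathbf{A}_1 + \mathbf{B}_1 L_1)\}$, i.e. ratios of complementary eigenvalues over chosen eigenvalues. Every such ratio satisfies $|\lambda_j/\mu_k| < 1$ if and only if $\max_j |\lambda_j| < \min_k |\mu_k|$, i.e. every chosen eigenvalue strictly dominates every complementary one in magnitude; under the distinct-magnitude hypothesis this holds exactly when the chosen set consists of the $d_1$ largest-magnitude eigenvalues of $\mathbf{M}_1$, and this selection is unique. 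For (a) $\Leftrightarrow$ (b), I would apply the Hartman--Grobman theorem (as quoted before the statement) to the linearization in the Perturbation Dynamics theorem: the map $\Delta L_1 \mapsto \mathbf{\Omega}_1(\Delta L_1)$ is the local linearization of the player-$1$ composite update \eqref{eq:update1}, so the fixed point is locally asymptotically stable for $i=1$ exactly when all $\xi_j \in \spec(\mathbf{\Omega}_1)$ lie strictly inside the unit disk.

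Finally I would close the ``for $i=1,2$'' gap. The player-$2$ statements are the verbatim analogues with $\mathbf{M}_2$, and since the composite update for $L_2$ is $\LFT_{12}\circ\LFT_{21}$ while that for $L_1$ is $\LFT_{21}\circ\LFT_{12}$, the two linearizations are products of the same pair of operators in opposite order, acting on matrix spaces of equal dimension $d_1 d_2$; hence they share the same characteristic polynomial and identical spectra, so stability holds for $i=1$ if and only if it holds for $i=2$. Consistency of the selection is confirmed by $\spec(\mathbf{M}_1) = 1/\spec(\mathbf{M}_2)$: choosing the $d_1$ largest-magnitude eigenvalues of $\mathbf{M}_1$ corresponds to the $d_1$ smallest-magnitude eigenvalues of $\mathbf{M}_2$, which is exactly the complement of player $2$'s $d_2$ largest-magnitude selection, so both players single out the same equilibrium.
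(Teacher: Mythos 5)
Your proof is correct, and its core is the same as the paper's: the block triangularization you establish first is exactly the paper's Proposition~\ref{thm:sim} (its equation~\eqref{eq:triangle1}, with $\mathbf{H}_1=\mathbf{A}_1+\mathbf{B}_1L_1$ and $\mathbf{H}_1'=\mathbf{D}_1-L_1\mathbf{B}_1$), and from there your chain (c)~$\Leftrightarrow$~(b) via Theorem~\ref{specratio} and (a)~$\Leftrightarrow$~(b) via Hartman--Grobman matches the paper step for step, including the observation that $\spec(\mathbf{M}_1)=\spec(\mathbf{H}_1)\sqcup\spec(\mathbf{H}_1')$ ties the perturbation spectrum to the eigenvalue-selection rule. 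Where you genuinely diverge is in transferring stability to player~2. The paper does this through the second similarity transform in Proposition~\ref{thm:sim}: it shows $\mathbf{H}_2'$ is similar to $\mathbf{H}_1^{-\top}$ and combines this with $\spec(\mathbf{M}_1)=1/\spec(\mathbf{M}_2)$ to conclude that player~2's perturbation dynamics $\Delta L_2^+=\mathbf{H}_2'\,\Delta L_2\,\mathbf{H}_2^{-1}$ are stable under the same subspace choice. You instead note that the two composite linearizations are $G\circ F$ and $F\circ G$ for $F=D\,\LFT_{12}(L_1^{\tt c})$, $G=D\,\LFT_{21}(L_2^{\tt c})$, both endomorphisms of spaces of dimension $d_1d_2$, so they share a characteristic polynomial; this is the standard $AB$-versus-$BA$ spectral identity. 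Your route is more elementary and entirely bypasses the machinery of~\eqref{eq:triangle2} and~\eqref{eq:asdf1}; what it does not buy you is the explicit identification of player~2's factors $\mathbf{H}_2,\mathbf{H}_2'$ in terms of player~1's data, which the paper's similarity argument provides and which is reused elsewhere (e.g., it makes the consistency of the two players' subspace selections a computation rather than the spectral-correspondence argument you sketch in your closing paragraph). One further small point in your favor: you state the distinct-magnitude hypothesis explicitly where the uniqueness of the selection in (c) needs it, which the paper only flags afterward in Remark~\ref{rem:distinct}.
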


Theorem~\ref{thm:uniquestable} not only establishes equivalent conditions for stability, but also shows that it is sufficient to establish stability for one player in order to show the  combined dynamics (i.e., \eqref{eq:update1} for $i=1,2$) are stable. However, the (local) rates of convergence for each player will depend on the eigenstructure of their individual dynamics.
\begin{corollary}
Players locally converge to $(L_1^{\tt c},L_2^{\tt c})$ with iteration complexity $O(\xi_{i,\max}^k)$ where $\xi_{i,\max}:=\max_{\xi\in \spec(\Omega_i(L_i^{\tt c})}{|\xi|}$ for player $i=1,2$, respectively.
\end{corollary}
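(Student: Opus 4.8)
The plan is to reduce the claim to the spectral analysis of the linearized operator $\mathbf{\Omega}_i$ already obtained in the Perturbation Dynamics theorem and Theorem~\ref{specratio}, and then transfer the resulting linear convergence rate to the nonlinear iteration \eqref{eq:main_composite_dynamics} using the hyperbolicity guaranteed by Theorem~\ref{thm:uniquestable}. First I would vectorize the matrix perturbation dynamics: writing $\delta_k := \mathrm{vec}(\Delta L_i(k))$, the linearization $\Delta L_i^+ = \mathbf{\Omega}_i(\Delta L_i)$ becomes an ordinary linear recursion $\delta_{k+1} = \widehat{\mathbf{\Omega}}_i \delta_k$ for the matrix $\widehat{\mathbf{\Omega}}_i$ representing $\mathbf{\Omega}_i$, so that $\delta_k = \widehat{\mathbf{\Omega}}_i^{\,k}\delta_0$. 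By Theorem~\ref{specratio} the spectrum of $\mathbf{\Omega}_i$ (equivalently of $\widehat{\mathbf{\Omega}}_i$) consists of the ratios $\lambda_j/\mu_k$, so its spectral radius is exactly $\xi_{i,\max}$.

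Next I would extract the asymptotic decay rate from the spectral radius. By Gelfand's formula $\lim_{k\to\infty}\|\widehat{\mathbf{\Omega}}_i^{\,k}\|^{1/k} = \xi_{i,\max}$, so for every $\varepsilon>0$ there is a constant $C_\varepsilon$ with $\|\widehat{\mathbf{\Omega}}_i^{\,k}\| \le C_\varepsilon (\xi_{i,\max}+\varepsilon)^k$; when $\mathbf{\Omega}_i$ is diagonalizable the polynomial Jordan prefactors vanish and one obtains the clean bound $\|\widehat{\mathbf{\Omega}}_i^{\,k}\| = O(\xi_{i,\max}^k)$. Thus the linearized error satisfies $\|\Delta L_i(k)\| = O(\xi_{i,\max}^k)$, which is the claimed rate at the level of the linearization.

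The final step is to promote this to the genuine nonlinear iteration. Since Theorem~\ref{thm:uniquestable} (condition b) gives $|\xi_j|<1$ for all $j$ at a stable fixed point, the fixed point is hyperbolic and $\mathbf{\Omega}_i$ is a strict contraction in some operator-induced norm $\|\cdot\|_*$ adapted to its spectral radius (for any chosen $\varepsilon$ with $\xi_{i,\max}+\varepsilon<1$). Writing the nonlinear update as $\Delta L_i^+ = \mathbf{\Omega}_i(\Delta L_i) + R_i(\Delta L_i)$ with remainder $R_i(\Delta L_i) = o(\|\Delta L_i\|)$ --- exactly the higher-order terms dropped in the Perturbation Dynamics proof after the Woodbury expansion --- I would choose a neighborhood small enough that $\|R_i(\Delta L_i)\|_* \le \varepsilon \|\Delta L_i\|_*$ there, so that the full map contracts at rate $\xi_{i,\max}+2\varepsilon<1$. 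A standard induction then yields $\|\Delta L_i(k)\|_* \le (\xi_{i,\max}+2\varepsilon)^k \|\Delta L_i(0)\|_*$ for any trajectory starting in that neighborhood, and equivalence of norms converts this back to the original metric, giving local convergence of rate $O(\xi_{i,\max}^k)$ for each player independently; applying this for $i=1$ and $i=2$ completes the argument.

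The main obstacle is precisely this transfer from the linear rate to the nonlinear one. Note that the Hartman--Grobman conjugacy invoked earlier is only a topological homeomorphism and can distort rates, so it establishes qualitative stability but not the quantitative bound; the adapted-norm contraction (Ostrowski-type) argument is what actually delivers the rate. Two points require care: the bound $O(\xi_{i,\max}^k)$ is literally attained only when $\mathbf{\Omega}_i$ is diagonalizable, and in the presence of nontrivial Jordan structure one can only guarantee $O((\xi_{i,\max}+\varepsilon)^k)$, so the statement should be read asymptotically (or as holding up to arbitrarily small loss in the base); and one must confirm that $R_i$ is genuinely $o(\|\Delta L_i\|)$ uniformly on a neighborhood, which follows from smoothness of the rational update \eqref{eq:main_composite_dynamics} away from the singularity of $\mathbf{A}_i+\mathbf{B}_iL_i$.
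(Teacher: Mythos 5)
Your proposal is correct, and it follows the route the paper implicitly intends: the corollary is stated without proof, as a direct consequence of the Perturbation Dynamics theorem and Theorem~\ref{specratio}, and your argument supplies exactly the missing steps---vectorizing $\mathbf{\Omega}_i$ to read off the spectral radius $\xi_{i,\max}$, then transferring the linear rate to the nonlinear iterates. Two of your refinements go beyond what the paper records and are worth keeping. First, you are right that the Hartman--Grobman conjugacy invoked in the proof of Theorem~\ref{thm:uniquestable} is only a topological equivalence and cannot by itself yield the quantitative rate; the adapted-norm (Ostrowski-type) contraction argument you give is genuinely required, not optional polish. Second, your caveat that $O(\xi_{i,\max}^k)$ holds verbatim only when $\mathbf{\Omega}_i$ is diagonalizable---with a polynomial Jordan prefactor, hence only $O\big((\xi_{i,\max}+\varepsilon)^k\big)$ in general---is a mild but real correction to the statement as written. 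Finally, the smoothness of the update near the fixed point, which your remainder estimate needs, is indeed guaranteed: $\mathbf{A}_i+\mathbf{B}_iL_i^{\tt c}=\mathbf{H}_i$ is invertible whenever $\mathbf{M}_i$ is (Assumption~\ref{a:well-defined}), so the rational map \eqref{eq:main_composite_dynamics} is $C^\infty$ on a neighborhood of $L_i^{\tt c}$ and your $o(\|\Delta L_i\|)$ bound on the remainder is well-founded.
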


This result is analogous to the scalar M\"obius transformation case and to prove it we need to further elucidate the eigenstructure of $\mathbf{M}_i$ which we do in the next section.


The following proposition (which is of independent interest) characterizes the eigenstructure of $\mathbf{M}_1$, without loss of generality, and is used to prove Theorem~\ref{thm:uniquestable}.
\begin{proposition}
\label{thm:sim}
The matrices $L_1$ computed from Theorem \ref{thm:invupdate} and $L_2$ from \eqref{eq:L2fromL1} define the following similarity transforms on $\mathbf{M}_1$
and  $\mathbf{M}_2$, respectively:
\begin{align}
\begin{bmatrix}
I & 0 \\-L_1 &  I
\end{bmatrix}
\begin{bmatrix}
\mathbf{A}_1 & \mathbf{B}_1\\ 
\mathbf{C}_1 & \mathbf{D}_1
\end{bmatrix}
\begin{bmatrix}
I & 0 \\ L_1 &  I
\end{bmatrix}
&
= 
\begin{bmatrix} 
\mathbf{H}_1 & \mathbf{B}_1 \\
0 & \mathbf{H}_1'
\end{bmatrix} 
\label{eq:triangle1} \\
\begin{bmatrix}
I & -L_2\\ 0 & I  
\end{bmatrix}
\begin{bmatrix}
\mathbf{D}_2 & \mathbf{C}_2\\ 
\mathbf{B}_2 & \mathbf{A}_2
\end{bmatrix}
\begin{bmatrix}
I & L_2 \\ 0 & I  
\end{bmatrix} 
& =
\begin{bmatrix} 
\mathbf{H}_2' & 0 \\
 \mathbf{B}_2 &  \mathbf{H}_2
\end{bmatrix}
\label{eq:triangle2}
\end{align}
where $
\mathbf{H}_1 = \mathbf{A}_1 + \mathbf{B}_1L_1 
$, $\mathbf{H}_1' = \mathbf{D}_1 -L_1 \mathbf{B}_1$,
$\mathbf{H}_2  = \mathbf{A}_2 + \mathbf{B}_2 L_2 $, and 
$\mathbf{H}_2' = \mathbf{D}_2 - L_2 \mathbf{B}_2 $.  
Furthermore, the spectrum of the $\mathbf{M}_1$-invariant subspace spanned by $[I; L_1]$
is $\text{spec}(\mathbf{H}_1)$ and the spectrum of the 
$\mathbf{M}_2$-invariant subspace spanned by $[L_2; I]$ is $\text{spec}(\mathbf{H}_2)$ and we can also write 
\begin{align*}
\mathbf{H}_1 & = \mathbf{A}_1 + \mathbf{B}_1L_1 = 
\big(D_2^\top + B_2 L_1 \big)^{-1}
\big(A_1 + B_1^\top L_1 \big)  \\
\mathbf{H}_2' & = \mathbf{D}_2 - L_2 \mathbf{B}_2 = 
(A_1 + B_1^\top L_1)^{-\top}(D_2^\top + B_2L_1)^\top.
\end{align*}
and $\mathbf{H}_2'$ is similar to $\mathbf{H}_1^{-\top}$.

\end{proposition}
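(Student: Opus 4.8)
The plan is to verify the two block triangularizations \eqref{eq:triangle1}--\eqref{eq:triangle2} by direct block multiplication, read the spectral claims off the resulting triangular structure, and then extract the two explicit formulas (and the similarity) by combining the generalized-eigenvalue form of Proposition~\ref{prop:alt} with the consistency relation~\eqref{eq:L2fromL1}.

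First I would establish \eqref{eq:triangle1}. Multiplying $\bmat{I & 0\\ -L_1 & I}\bmat{\mathbf{A}_1 & \mathbf{B}_1\\ \mathbf{C}_1 & \mathbf{D}_1}\bmat{I & 0\\ L_1 & I}$ out gives diagonal blocks $\mathbf{A}_1+\mathbf{B}_1L_1=\mathbf{H}_1$ and $\mathbf{D}_1-L_1\mathbf{B}_1=\mathbf{H}_1'$, upper-right block $\mathbf{B}_1$, and lower-left block $(\mathbf{C}_1+\mathbf{D}_1L_1)-L_1(\mathbf{A}_1+\mathbf{B}_1L_1)$. The fixed-point identity from Theorem~\ref{thm:equilib}, $L_1(\mathbf{A}_1+\mathbf{B}_1L_1)=\mathbf{C}_1+\mathbf{D}_1L_1$, annihilates this lower-left block and yields \eqref{eq:triangle1}. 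The identity \eqref{eq:triangle2} is proved identically with the players swapped: multiplying $\bmat{I & -L_2\\ 0 & I}\mathbf{M}_2\bmat{I & L_2\\ 0 & I}$ out produces upper-right block $(\mathbf{C}_2+\mathbf{D}_2L_2)-L_2(\mathbf{A}_2+\mathbf{B}_2L_2)$, which vanishes by the analogous fixed-point relation for $L_2$, leaving the block-lower-triangular form with diagonal blocks $\mathbf{H}_2'=\mathbf{D}_2-L_2\mathbf{B}_2$ and $\mathbf{H}_2=\mathbf{A}_2+\mathbf{B}_2L_2$.

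Next, since \eqref{eq:triangle1} exhibits $\mathbf{M}_1$ as similar to a block upper-triangular matrix, $\spec(\mathbf{M}_1)=\spec(\mathbf{H}_1)\cup\spec(\mathbf{H}_1')$; and because Theorem~\ref{thm:equilib} already gives $\mathbf{M}_1[I; L_1]=[I; L_1]\mathbf{H}_1$, the matrix $\mathbf{H}_1$ is exactly the restriction of $\mathbf{M}_1$ to the invariant subspace $\mathrm{col}[I;L_1]$ written in the basis $[I;L_1]$, so that subspace has spectrum $\spec(\mathbf{H}_1)$; the statement for $\mathbf{M}_2$, $[L_2;I]$ and $\mathbf{H}_2$ follows the same way. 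The explicit formula for $\mathbf{H}_1$ comes directly from the generalized-eigenvalue equation \eqref{eq:blab}, whose top block row reads $(A_1+B_1^\top L_1)=(D_2^\top+B_2L_1)\mathbf{H}_1$; inverting the prefactor (legitimate since $\mathbf{H}_1$ is invertible and $A_1+B_1^\top L_1$ is invertible as used throughout, forcing $D_2^\top+B_2L_1$ invertible) gives $\mathbf{H}_1=(D_2^\top+B_2L_1)^{-1}(A_1+B_1^\top L_1)$.

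The remaining and most delicate claim is the formula for $\mathbf{H}_2'$ together with its similarity to $\mathbf{H}_1^{-\top}$. Writing $P=A_1+B_1^\top L_1$ and $Q=D_2^\top+B_2L_1$, the target reads $\mathbf{H}_2'=P^{-\top}Q^\top=(QP^{-1})^\top$. The hard part will be obtaining this exact identity: I would expand $\mathbf{D}_2-L_2\mathbf{B}_2$ using the blocks of $\mathbf{M}_2=M_1^{-\top}M_2$ and substitute the consistency relation \eqref{eq:L2fromL1}, $L_2^\top=-(B_1+D_1L_1)P^{-1}$, to eliminate $L_2$ in favor of $L_1$; this is where essentially all of the algebra lives. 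Once the formula is in hand the similarity is immediate: from $\mathbf{H}_1=Q^{-1}P$ we get $QP^{-1}=Q\mathbf{H}_1^{-1}Q^{-1}$, so $QP^{-1}$ is similar to $\mathbf{H}_1^{-1}$ and hence $\mathbf{H}_2'=(QP^{-1})^\top$ is similar to $\mathbf{H}_1^{-\top}$. As a cross-check and an alternative, formula-free route to the similarity, I would use that $M_1,M_2$ are symmetric (their top-left and bottom-right blocks are cost Hessians): transposing $M_1[I;L_1]=M_2[I;L_1]\mathbf{H}_1$ and post-multiplying by $M_2^{-1}$ shows $[I;L_1]$ spans an $\mathbf{M}_2$-invariant subspace with restricted action $\mathbf{H}_1^{-1}$, so comparison with \eqref{eq:triangle2} (when this subspace is complementary to $[L_2;I]$) identifies $\mathbf{H}_2'$ as cospectral with $\mathbf{H}_1^{-1}$, which is similar to $\mathbf{H}_1^{-\top}$ since every matrix is similar to its transpose.
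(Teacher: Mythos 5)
Your handling of \eqref{eq:triangle1}, the spectral readings, and the formula $\mathbf{H}_1=(D_2^\top+B_2L_1)^{-1}(A_1+B_1^\top L_1)$ from the top block row of \eqref{eq:blab} all coincide with the paper. For \eqref{eq:triangle2} you take a genuinely different route: you mirror the block multiplication and invoke ``the analogous fixed-point relation for $L_2$.'' That relation is true but not free under the proposition's hypotheses, which define $L_2$ via \eqref{eq:L2fromL1}; you need the (easy, but currently missing) observation that $L_2=\LFT_{1,2}(L_1)$ and $L_1=\LFT_{2,1}(L_2)$ force $L_2=\LFT_{1,2}(\LFT_{2,1}(L_2))$, so that Theorem~\ref{thm:invupdate} with $i=2$ yields $L_2(\mathbf{A}_2+\mathbf{B}_2L_2)=\mathbf{C}_2+\mathbf{D}_2L_2$. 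The paper never touches player 2's fixed-point equation: it transposes \eqref{eq:L2fromL1} into \eqref{eq:asdf1} and combines it with \eqref{eq:blab} to get the left-eigenrow relation $[\,I\ \ {-L_2}\,]\,\mathbf{M}_2=\mathbf{H}_2'\,[\,I\ \ {-L_2}\,]$ with $\mathbf{H}_2'=(A_1+B_1^\top L_1)^{-\top}\mathbf{H}_1^{-\top}(A_1+B_1^\top L_1)^{\top}$, which delivers \eqref{eq:triangle2}, the explicit formula, and the similarity to $\mathbf{H}_1^{-\top}$ in one stroke; your route gets the triangular form more symmetrically but must then earn the formula for $\mathbf{H}_2'$ separately.

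That formula is the genuine gap: you explicitly defer it (``this is where essentially all of the algebra lives''), so the proposition's final claims are only proved conditionally in your write-up. The step would not fail, and it is far shorter than your brute-force elimination plan suggests once you use the transposed consistency relation you already cite: with $P=A_1+B_1^\top L_1$, transposing \eqref{eq:asdf1} and right-multiplying by $M_1^{-\top}$ gives $[\,I\ \ {-L_2}\,]\,M_1^{-\top}=P^{-\top}[\,I\ \ L_1^\top\,]$, hence
\begin{align*}
\mathbf{D}_2-L_2\mathbf{B}_2
=[\,I\ \ {-L_2}\,]\,M_1^{-\top}M_2\bmat{I\\ 0}
=P^{-\top}[\,I\ \ L_1^\top\,]\bmat{D_2\\ B_2^\top}
=P^{-\top}\big(D_2^\top+B_2L_1\big)^\top,
\end{align*}
after which your own manipulation $\mathbf{H}_2'=(QP^{-1})^\top=P^{-\top}\mathbf{H}_1^{-\top}P^\top$ closes the similarity claim. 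Two smaller caveats on your ``formula-free'' cross-check: it assumes $M_1,M_2$ are symmetric, whereas the paper consistently distinguishes $M_2$ from $M_2^\top$ (see \eqref{eq:blab}) and so does not impose $A_i=A_i^\top$, $D_i=D_i^\top$; and it assumes $\mathrm{col}\,[\,I;L_1\,]$ is complementary to $\mathrm{col}\,[\,L_2;I\,]$, which can fail when $I-L_2L_1$ is singular. So that route should remain a heuristic check, not a substitute for the computation above.
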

\begin{proof}
The expression in \eqref{eq:triangle1} is immediate with the zero block coming from the fixed point equation \eqref{eq:main_composite_dynamics}:
\begin{align*}
\begin{bmatrix}
I & 0 \\-L_1 &  I
\end{bmatrix}
\begin{bmatrix}
\mathbf{A}_1 & \mathbf{B}_1\\ 
\mathbf{C}_1 & \mathbf{D}_1
\end{bmatrix}
\begin{bmatrix}
I & 0 \\ L_1 &  I
\end{bmatrix}
=
\begin{bmatrix} 
\mathbf{A}_1  + \mathbf{B}_1L_1 & \mathbf{B}_1 \\
-L_1 (\mathbf{A}_1 + \mathbf{B}_1 L_1) + 
\mathbf{C}_1 + \mathbf{D}_1 L_1 
& \mathbf{D}_1 - L_1 \mathbf{B}_1
\end{bmatrix} 
=
\begin{bmatrix} 
\mathbf{H}_1 & \mathbf{B}_1 \\
0 & \mathbf{H}_1'
\end{bmatrix}.
\end{align*}
To see the similarity transform on $\mathbf{M}_2$, note that \eqref{eq:L2fromL1} can be rewritten as 
\begin{align}
\begin{bmatrix}
I \\ -L_2^\top 
\end{bmatrix}
= 
M_1
\begin{bmatrix}
I \\ L_1
\end{bmatrix}
(A_1 + B_1^\top L_1)^{-1}.
\label{eq:asdf1}
\end{align}
Transposing and expanding allows us to write 
\begin{align*}
\begin{bmatrix} I \  -L_2 \end{bmatrix} 
\mathbf{M}_2
& = \begin{bmatrix} I \  -L_2 \end{bmatrix} M_1^{-\top} M_2   = (A_1 + B_1^\top L_1)^{-\top} \begin{bmatrix} I \ \ L_1^\top \end{bmatrix} M_2.
\end{align*}
The expression in \eqref{eq:blab}
gives 
\[M_2^\top \bmat{I \\ L_1}
=M_1 \bmat{I \\ L_1}\mathbf{H}_1^{-1},\]
which allows us to write 
\begin{align*}
\begin{bmatrix} I \  -L_2 \end{bmatrix} \mathbf{M}_2
& = (A_1 + B_1^\top L_1)^{-\top} \mathbf{H}_1^{-\top} \begin{bmatrix} I \  L_1^\top \end{bmatrix} M_1^\top \\
& = (A_1 + B_1^\top L_1)^{-\top} \mathbf{H}_1^{-\top} 
(A_1 + B_1^\top L_1)^{\top}
(A_1 + B_1^\top L_1)^{-\top}
\begin{bmatrix} I \  L_1^\top \end{bmatrix} M_1^\top \\
& = \mathbf{H}_2' \begin{bmatrix} I \ -L_2 \end{bmatrix}
\end{align*}
where in the last step we have used \eqref{eq:asdf1} again and substituted 
$
\mathbf{H}_2'
 = [A_1 + B_1^\top L_1]^{-\top} \mathbf{H}_1^{-\top} [A_1 + B_1^\top L_1]^{\top}  $
Note that $\mathbf{H}_2'$ and $\mathbf{H}_1^{-\top}$ are similar. 
From the above expression, we deduce the top row of the following equation:
\begin{align*}
\begin{bmatrix} I & -L_2 \\ 0 & I  \end{bmatrix}
\begin{bmatrix}
\mathbf{D}_2 & \mathbf{C}_2 \\
\mathbf{B}_2 & \mathbf{A}_2
\end{bmatrix}
=
\begin{bmatrix} 
\mathbf{H}_2' & 0 \\
 \mathbf{B}_2 &  \mathbf{A}_2 + \mathbf{B}_2 L_2
\end{bmatrix}
\begin{bmatrix} I & -L_2 \\ 0 & I \end{bmatrix}
\end{align*}
and the bottom row is then immediate. Right multiplying by $\left[\substack{I \ L_2\\ 0 \ I}\right] $ gives \eqref{eq:triangle2}.
The characterization of the invariant subspaces spanned by $[I; L_1]$ and $[L_2; I]$ follows immediately from the block diagonal structure.  The alternate characterizations of $\mathbf{H}_1$ and $\mathbf{H}_2'$ follow from the characterization of $\mathbf{H}_1$ given in Proposition \ref{prop:alt} and the definition of $\mathbf{H}_2'$ above which concludes the proof. \end{proof}

We now prove Theorem \ref{thm:uniquestable}.
\begin{proof}[Proof of Theorem \ref{thm:uniquestable}] 
The fact that a.$\Longleftrightarrow$b.~is immediate from Hartman-Grobman \cite{sastry2013nonlinear}. Hence, it only remains to show that a.$\Longleftrightarrow$c. 

The block diagonal structure given in Theorem \ref{thm:sim} gives that 
$\text{spec}(\mathbf{M}_i) = \text{spec}(\mathbf{H}_i) \sqcup \text{spec}(\mathbf{H}_{i}')$. Since the perturbation dynamics are given by $\Delta L_i^+ = \mathbf{H}_i' \Delta L_i \mathbf{H}_i^{-1}$ the result immediately follows for $L_1$ noting that $\spec\big(\mathbf{H}_1\big)$ corresponds to the invariant subspace $K_1$ and the Lyapunov stability arguments in Lemma \ref{lem:dtlyap2}.  Since $\mathbf{H}_1^{-\top}$ and $\mathbf{H}_2'$ are similar (see Theorem~\ref{thm:sim}) and noting that $\spec(\mathbf{M}_1)=1/\spec(\mathbf{M}_2)$ as stated previously, this choice also implies that player 2's perturbation dynamics are asymptotically stable which in turn implies that the dynamics \eqref{eq:update1} are locally asymptotically stable around $(L_1^{\tt c},L_2^{\tt c})$. \end{proof}

 \begin{remark}
 \label{rem:distinct}
A result of the analysis in Theorem \ref{thm:uniquestable} is that if the eigenvalues of $\mathbf{M}_1$ (and $\mathbf{M}_2$) clearly divide into large and small magnitude sets (of the appropriate number) where all the eigenvalues in the large set are strictly large than those in the small set, then there is a unique way to choose an (asymptotically) stable CCVE.  When the eigenvalues cannot clearly be divided this way, there may be multiple ways to construct (marginally) stable CCVE. Two particularly interesting cases are when there are eigenvalues from the same Jordan subspace or complex eigenvalues from the same conjugate pair in each set. 
In this second case in particular, the only (marginally) stable conjectures will be complex and any associated real conjectures will exhibit oscillatory behavior analogous to elliptic M\"obius transformations. These interesting cases will be examined in future work.  
\end{remark}
\section{Observations on the Second Order Conditions}
\label{sec:2ndorder}
Given the stability considerations given above there is generally a unique (or limited number) of stable first-order CCVE---i.e., first-order CCVE that can be reached via the response dynamics between the players.  Once these stable equilibria are determined, one should check the second order conditions 
\begin{align}
A_i + L_i^\top B_i + B_i^\top L_i + L_i^\top D_i L_i  
\succ 0     
\qquad \qquad \text{for  $i=1,2$.}
\label{eq:2ndorderremark}
\end{align}
to see if the stable first-order CCVE is well-posed---i.e., if the stable point is a CCVE for the game.  In general this is not guaranteed and will depend on the relative magnitudes of the parameters $A_i,B_i,$ and $D_i$.  In particular, a large enough $A_i \succ 0$ will clearly make it more likely for \eqref{eq:2ndorderremark} to be satisfied. Note also that \eqref{eq:2ndorderremark} can be rewritten as 
\begin{align*}
\begin{bmatrix}
I \\ L_1 
\end{bmatrix}^\top 
\begin{bmatrix}
A_1 & B_1^\top \\
B_1 & D_1
\end{bmatrix}
\begin{bmatrix}
I \\ L_1 
\end{bmatrix}
\succ 0, \qquad \qquad 
\begin{bmatrix}
L_2 \\ I
\end{bmatrix}^\top 
\begin{bmatrix}
D_2 & B_2 \\
B_2^\top  & A_2
\end{bmatrix} 
\begin{bmatrix}
L_2 \\ I
\end{bmatrix}
\succ 0, 
\end{align*}
and thus $M_1, M_2 \succ 0 $ 
are sufficient conditions for \eqref{eq:2ndorderremark} to be satisfied; however, in many practical problems, $M_1,M_2 \nsucc 0$ since $D_1$ and $D_2$ may be zero, low rank, indefinite, or even negative definite.  Simple numerical experiments also show that $M_1,M_2 \succ 0$ is far too conservative of a condition and that \eqref{eq:2ndorderremark} often holds even when it does not. Further analysis of this condition is left to future work. 





\begin{figure}[t]
\centering
\subfloat[][Player 1's action space]{
\includegraphics[width=.48\textwidth]{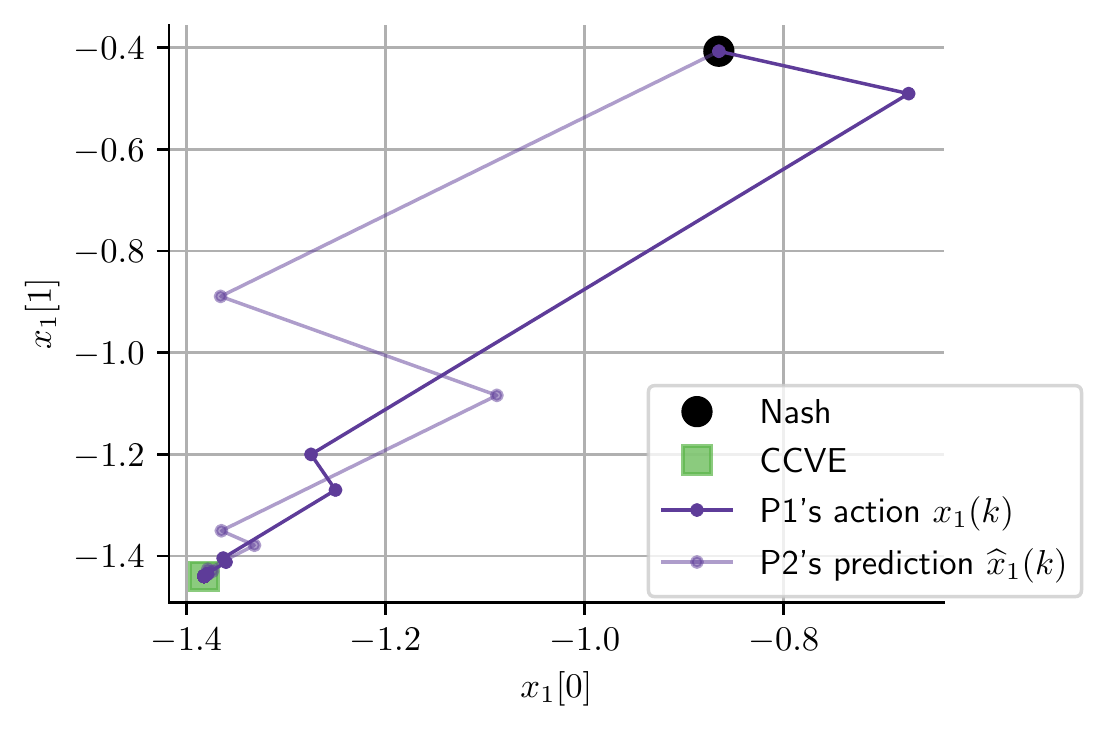}
}
\subfloat[][Player 2's action space]{
\includegraphics[width=.48\textwidth]{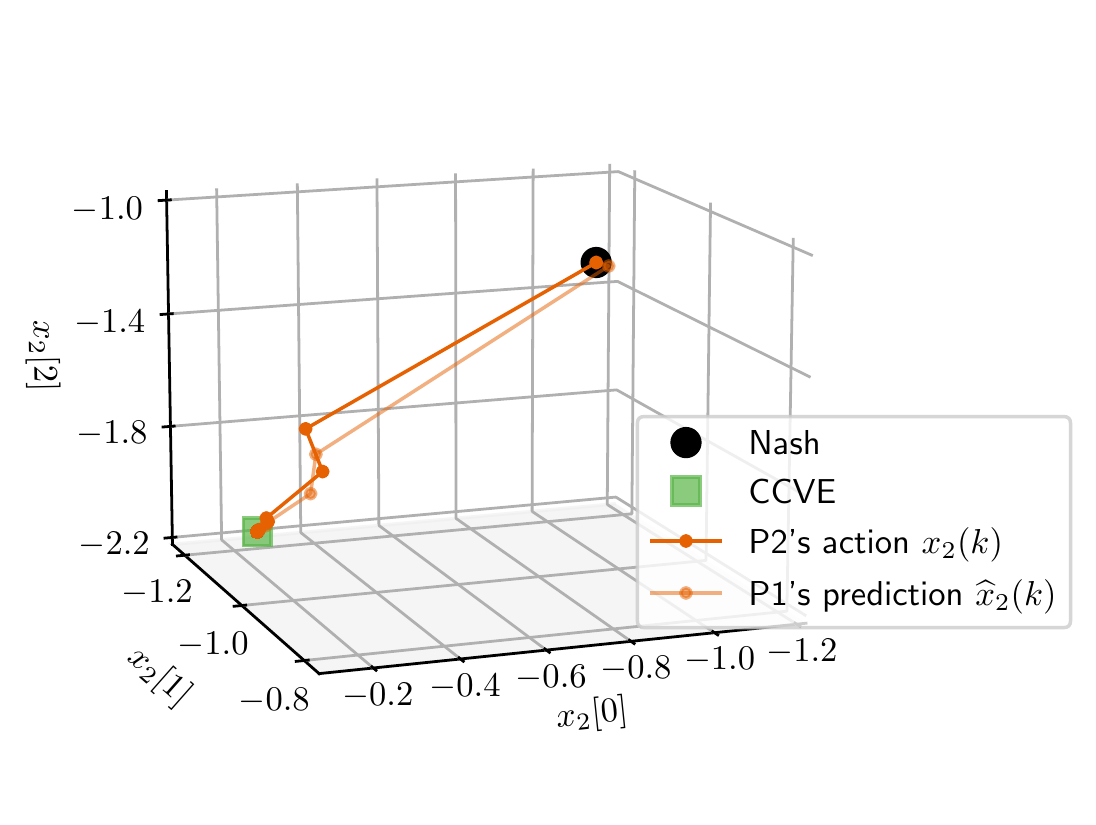}
}
\caption{
\emph{Player Actions and Predictions:} 
The actions and predictions of (a) player 1 and (b) player 2.
Players initialize at the Nash equilibrium, ie. $(L_1,\ell_1) = (0,x_2^{\text{NE}})$ and $(L_2,\ell_2) = (0,x_1^{\text{NE}})$, 
and then update their conjectures via the conjectural variations iteration~\eqref{eq:update1}. 
Each player's action and also their guess of the other player's action are plotted in $\mathbb{R}^2$ and $\mathbb{R}^3$ respectively.  Both the actions and predictions converge to the stable CCVE, however, note that (unsurprisingly) the predictions are not accurate until convergence.}
\label{fig:2x3a}
\end{figure}

\section{Numerics}

In this section, two examples demonstrate the convergence of the proposed iterative method for
quadratic games with different dimensions. The iterative method converges to the unique stable CCVE in both examples.
The code and parameters for the examples are provided\footnote{
The code and parameters for all experiments are available at 
\mbox{\url{https://github.com/bchasnov/2023lcssCCVE}}.
}.

In the first example, we tested the conjectural variations iteration update~\eqref{eq:update1} for a two-player quadratic game on a five-dimensional joint action space,
\mbox{$\mb{R}^{2}\times\R^{3}$}.
The cost matrices $A_i,D_i$ were chosen to be scaled identities, $B_i$ were randomly sampled from a uniform distribution, and $a_i,b_i$ were chosen to be zero or one vectors.
The cost matrices for player 1 were
\begin{align*}
A_1&=\bmat{1 & 0 \\ 0 & 1},
B_1=\bmat{-0.1 & 0.2 \\ -0.5 & -0.2 \\ -0.4 & -0.4},
D_1=-\bmat{0.2 & 0 & 0 \\ 0 & 0.2 & 0 \\  0 & 0 & 0.2 },
a_1=\bmat{0 \\ 0},
b_1=\bmat{0 \\ 0 \\ 0},
\end{align*}
and for player 2 were 
\begin{align*}
A_2&=\bmat{1 & 0 & 0 \\ 0 & 1 & 0 \\ 0  & 0 & 1},
B_2=\bmat{0.3 & 0.2 & 0.1 \\ 
0.0 & 0.1 & -0.2},
D_2=-\bmat{0.1 & 0 \\ 0 & 0.1},
a_2=\bmat{1 \\ 1 \\ 1},
b_2=\bmat{1 \\ 1}.
\end{align*}
At iteration $k$, the players formed affine conjectures $c_{-i}(x_i)=L_i(k)x_i + \ell_i(k),\ i=1,2$ where $L_1(k)\in\R^{3\times 2}$, $L_2(k)\in\R^{2\times 3}$, $\ell_1(k)\in\R^3$ and $\ell_2(k)\in\R^2$.
The players initially play a Nash equilibrium $(x_1^{\text{NE}},x_2^{\text{NE}})$ which is equivalent to assuming conjectures $(L_1,\ell_1) = (0,x_2^{\text{NE}})$ and $(L_2,\ell_2) = (0,x_1^{\text{NE}})$.  
%
For $20$ iterations, each player updates their conjecture via  $L_{1}(k+1) = \LFT_{1,2}( L_2(k) )$ and $L_{2}(k+1) = \LFT_{2,1}( L_1(k) )$ respectively.
At each iteration, player $i$ also computes their optimal action (assuming their current conjecture is correct) by solving $$ 
x_{i}(k) =
\amin_{x_i}\{f_i(x_i,x_{-i})|\ x_{-i}=L_i(k) x_i + \ell_i(k) \}
$$
and their current guess of what the other player is doing $\hat{x}_{-i}(k) = L_i(k)x_{i} (k) + \ell_i(k)$.  Actions for each player as well as the other player's guess of their actions are plotted in Fig.~\ref{fig:2x3a}.  

This iteration process converges to the stable CCVE, confirming the theoretical prediction (Fig.~\ref{fig:2x3b}).  Note that while the players initially play Nash strategies, conjecturing about the other player's optimization problem leads them away from the Nash equilibrium to the CCVE.  This demonstrates that players that make conjectures can end up at dramatically different equilibria than players that do not. 
Note also that each player's prediction of the other player's actions is inaccurate until the conjectures converge and player's reach the CCVE.  Finally, we plot the cost for each player $f_i(x_i(k),x_{-i}(k))$ at each iteration and the total or \emph{social cost} at each iteration 
$$f_{s}\Big(x_1(k),x_2(k)\Big) = f_1\Big(x_1(k),x_2(k)\Big) + f_2\Big(x_1(k),x_2(k)\Big)$$
as well as the optimal social cost $f^*_s = \min_{x_1,x_2} f_{s}(x_1,x_2)$ in Fig.~\ref{fig:2x3c}.  Interestingly in this case, each player's cost is actually lower at the CCVE then at Nash (though this is not guaranteed in all cases).

\begin{figure}[t] 
\centering
\subfloat[][Player 1's action and Player 2's prediction]{
\includegraphics[width=.4\textwidth]{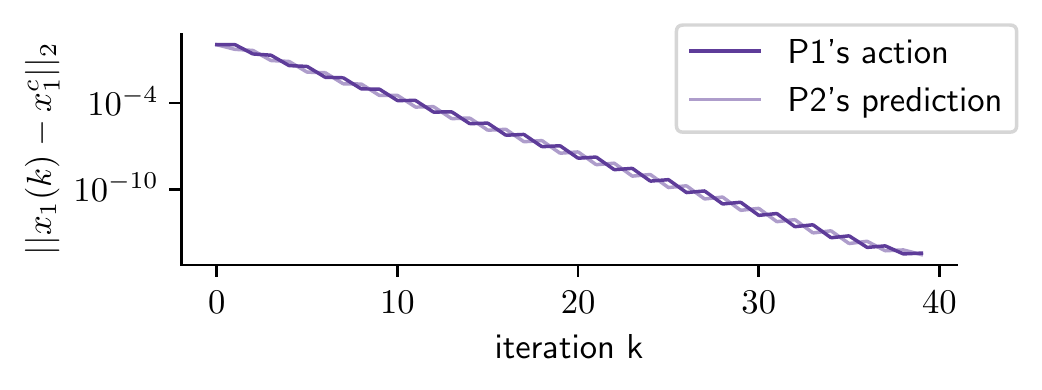}
}
\subfloat[][Player 2's action and Player 1's prediction]{
\includegraphics[width=.4\textwidth]{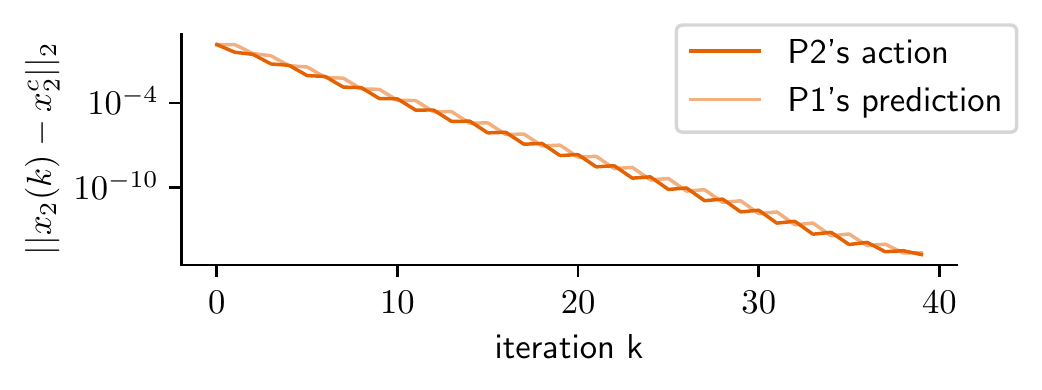}
}

\subfloat[][Player 1's conjecture]{
\includegraphics[width=.4\textwidth]{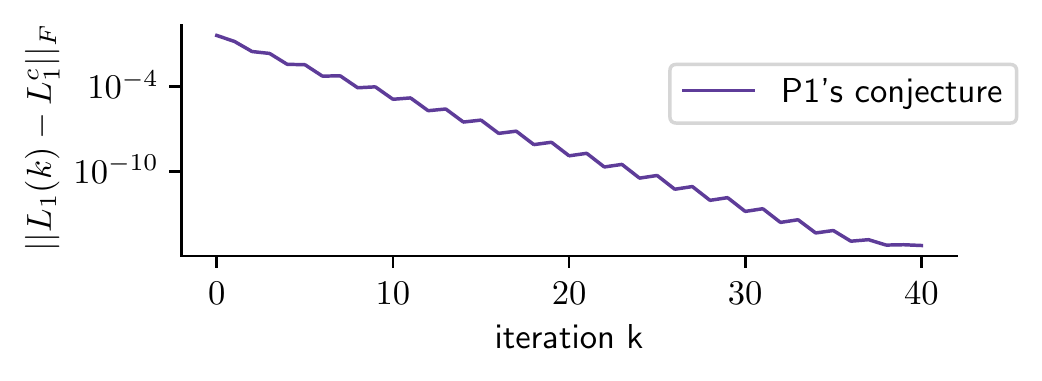}
}
\subfloat[][Player 2's conjecture]{
\includegraphics[width=.4\textwidth]{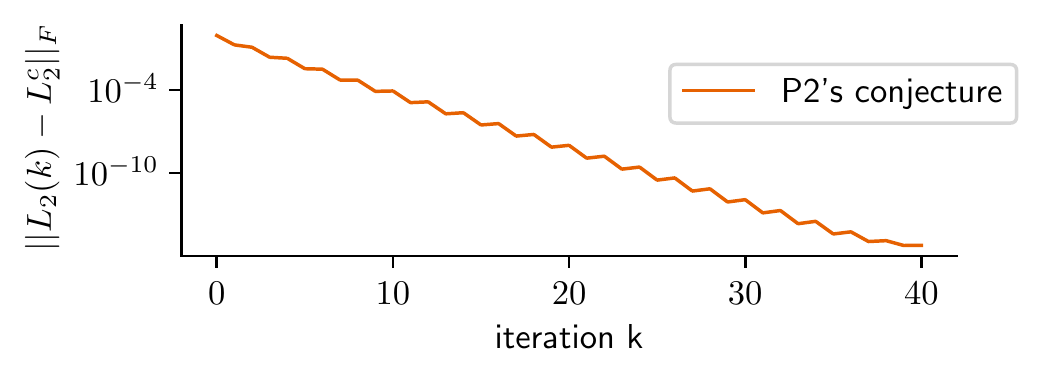}
}
\caption{
\emph{Convergence of Conjectures and Actions:}
The conjectures $(L_1(k),L_2(k))$
converge to the stable equilibrium conjectures 
and both the actions $(x_1(k),x_2(k))$ and predictions $(\hat{x}_1(k),\hat{x}_2(k))$
converge to the CCVE shown here for player 1 (a) and player 2 (b). }
\label{fig:2x3b}
\end{figure}

\begin{figure}[t]
\centering
\subfloat[][Individual costs]{
\includegraphics[width=.4\textwidth]{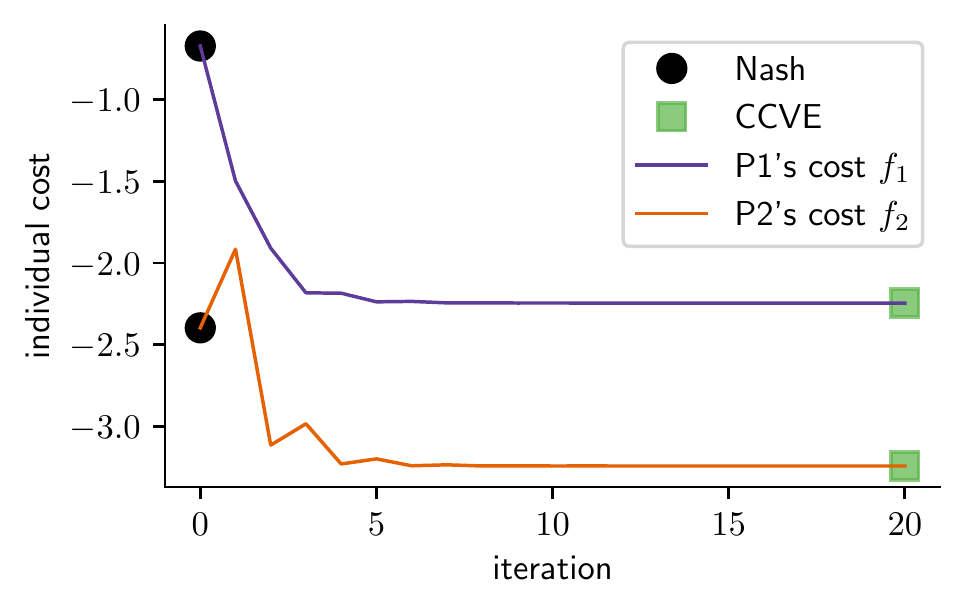}
}
\subfloat[][Social cost]{
\includegraphics[width=.4\textwidth]{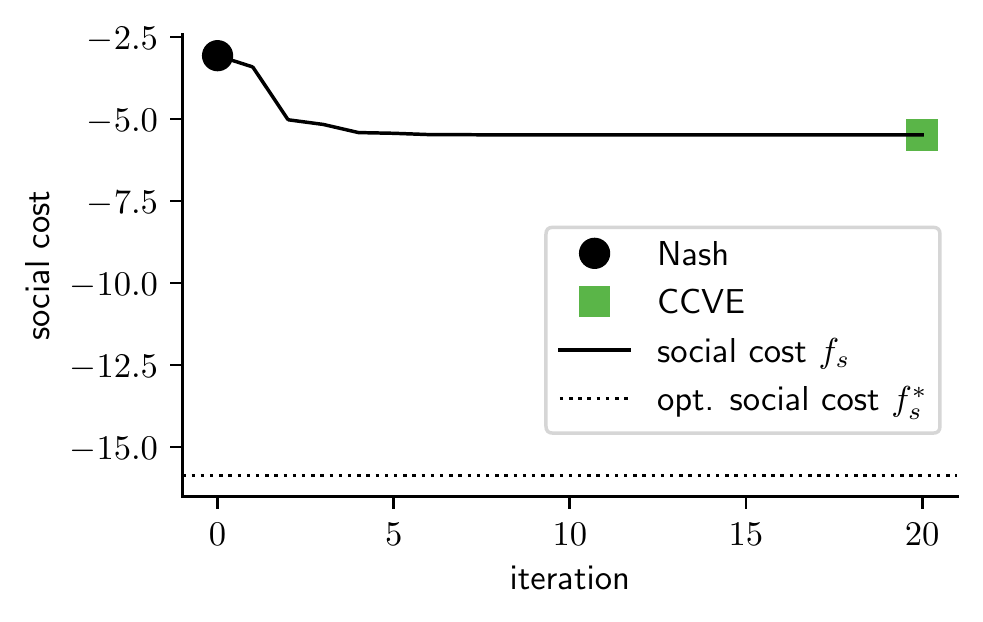}
}
\caption{
\emph{Player and Social Costs:}
The costs for each player $f_i(x_i(k),x_{-i}(k))$ at each iteration are shown in (a) and the social cost $f_s(x_i(k),x_{-i}(k))$ is shown in (b).  The optimal social cost $f_s^*$ is also shown.  Note that the initial costs are the Nash costs and the final costs are the CCVE costs.  Note that in this case the CCVE costs are lower than Nash (though this is not guaranteed in general).}
\label{fig:2x3c}
\end{figure}

%

In the second example, we tested the conjectural variations iteration update~\eqref{eq:update1} for larger two-player quadratic game on 
\mbox{$\R^{50}\times\R^{60}$}. 
The cost matrices for the player 1 were 
$A_1=13I_{d_1},\ D_1=-0.2 I_{d_2},\ a_1=0,\ b_1=0$
and for player 2 were
$A_2=13I_{d_2},\ D_2=- 0.1 I_{d_1},\ a_2=1,\ b_2=1$
where $d_1=50,d_2=60$, the identity matrix $I_{d}$ is $d$ dimensional and $0,1$ are zero and one vectors. Additionally, the cost matrix $B_1,B_2$ was randomly sampled from a uniform distribution between $-1$ and $1$.  
For five randomly sampled quadratic games, we found that the conjectural variation iteration converges to the unique stable CCVE point. The linear fractional transformation was implemented by using standard numerical methods for solving linear matrix equations.
Fig.~\ref{fig:202x303} plots the distance of the iterates from the stable equilibrium decreasing at an exponential rate for all five quadratic games. 
These examples show the computability of the stable consistent equilibria by numerical methods.

\begin{figure}[t]
\centering
\includegraphics[width=.45\textwidth]{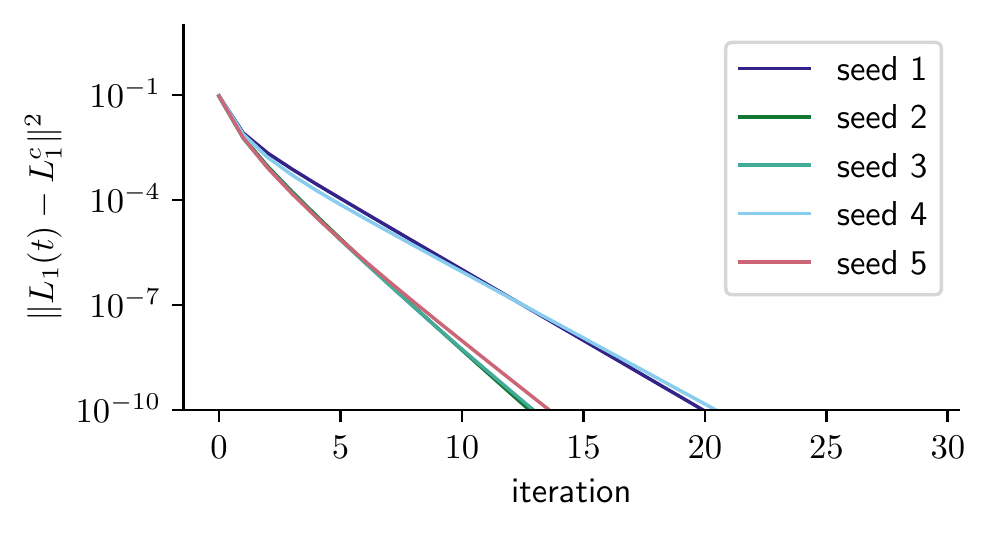}
\caption{
The iteration is demonstrated for quadratic games on $\R^{50}\times\R^{60}$ with randomly sampled cost matrices. 
The distance from the iterate to the stable CCVE is plotted, showing convergence towards the stable CCVE.
}
\label{fig:202x303}
\end{figure}

In both examples,  for comparison, we also compute  
the stable consistent conjectural variations equilibrium directly from the cost matrices without running the conjectural variation iteration. 
We applied Theorem~\ref{thm:equilib} to solve for a fixed point of the composite linear fractional transformation in~\eqref{eq:main_composite_dynamics}.
Selecting the eigenvectors corresponding to the eigenvalues with smallest magnitude to be the columns of $Y,X$, the stable and consistent conjectural variation is $L_1^* = XY^{-1}$.
To verify that $L_1^*$ is an asymptotically stable equilibrium of the conjectural variations iteration, we applied Theorem~\ref{specratio}
and determined that the spectrum of the linear operator of the perturbation dynamics has eigenvalues with magnitude less than one, thus implying that the fixed point $L_1^*$ is locally asymptotically stable with respect to~\eqref{eq:update1}. 
Local rates of convergence will depend on the eigenstructure of the dynamics.






\section{Discussion \& Open Questions}
We introduced a novel analysis of CCVE by drawing on tools from the analysis of coupled Riccati equations. There are a number of interesting open questions including how players might adapt their conjectural variations in both repeated and dynamic games by repeatedly interacting with their opponents, as well as how players might adopt policy gradient like procedures to learn their policies contingent on conjectures adapted over time. 

\bibliographystyle{plainnat}
\bibliography{2023lcsscdc}

\begin{thebibliography}{20}
\providecommand{\natexlab}[1]{#1}
\providecommand{\url}[1]{\texttt{#1}}
\expandafter\ifx\csname urlstyle\endcsname\relax
  \providecommand{\doi}[1]{doi: #1}\else
  \providecommand{\doi}{doi: \begingroup \urlstyle{rm}\Url}\fi

\bibitem[Aboukandil et~al.(2003)Aboukandil, Freiling, Ionescu, Jank,
  et~al.]{aboukandil2003matrix}
H~Aboukandil, G~Freiling, V~Ionescu, G~Jank, et~al.
\newblock {Matrix Riccati equations in control and systems theory}.
\newblock \emph{IEEE TAC}, 49\penalty0 (10):\penalty0 2094--2095, 2003.

\bibitem[Balduzzi et~al.(2018)Balduzzi, Racaniere, Martens, Foerster, Tuyls,
  and Graepel]{balduzzi2018mechanics}
David Balduzzi, Sebastien Racaniere, James Martens, Jakob Foerster, Karl Tuyls,
  and Thore Graepel.
\newblock The mechanics of $n$-player differentiable games.
\newblock In \emph{ICML}, 2018.

\bibitem[Ba{\c{s}}ar and Olsder(1998)]{bacsar1998dynamic}
Tamer Ba{\c{s}}ar and Geert~Jan Olsder.
\newblock \emph{Dynamic noncooperative game theory}.
\newblock SIAM, 1998.

\bibitem[Bowley(1924)]{bowley1924mathematical}
Arthur~Lyon Bowley.
\newblock \emph{The mathematical groundwork of economics: An introductory
  treatise, by al bowley}.
\newblock Oxford: Clarendon Press, 1924.

\bibitem[Chasnov et~al.(2020)Chasnov, Fiez, and Ratliff]{chasnov2020opponent}
Benjamin Chasnov, Tanner Fiez, and Lillian~J Ratliff.
\newblock Opponent anticipation via conjectural variations.
\newblock In \emph{Smooth Games Optimization and Machine Learning Workshop at
  NeurIPS}, 2020.

\bibitem[Chasnov et~al.(2023)Chasnov, Ratliff, and Burden]{chasnov2023human}
Benjamin~J. Chasnov, Lillian~J. Ratliff, and Samuel~A. Burden.
\newblock Human adaptation to adaptive machines converges to game-theoretic
  equilibria.
\newblock \emph{arXiv preprint arXiv:2305.01124}, May 2023.
\newblock arXiv:2305.01124.

\bibitem[D{\'\i}az et~al.(2010)D{\'\i}az, Villar, Campos, and
  Reneses]{diaz2010electricity}
Cristian~A D{\'\i}az, Jos{\'e} Villar, Fco~Alberto Campos, and Javier Reneses.
\newblock Electricity market equilibrium based on conjectural variations.
\newblock \emph{Electric power systems research}, 80\penalty0 (12):\penalty0
  1572--1579, 2010.

\bibitem[Fiez et~al.(2020)Fiez, Chasnov, and Ratliff]{fiez2020implicit}
Tanner Fiez, Benjamin Chasnov, and Lillian Ratliff.
\newblock Implicit learning dynamics in stackelberg games: Equilibria
  characterization, convergence analysis, and empirical study.
\newblock In \emph{ICML}, 2020.

\bibitem[Figui{\`e}res(2004)]{figuieres2004theory}
Charles Figui{\`e}res.
\newblock \emph{Theory of conjectural variations}.
\newblock World Scientific, 2004.

\bibitem[Foerster et~al.(2018)Foerster, Chen, Al-Shedivat, Whiteson, Abbeel,
  and Mordatch]{foerster2018learning}
Jakob Foerster, Richard~Y Chen, Maruan Al-Shedivat, Shimon Whiteson, Pieter
  Abbeel, and Igor Mordatch.
\newblock Learning with opponent-learning awareness.
\newblock In \emph{ICML}, 2018.

\bibitem[Friedman and Mezzetti(2002)]{friedman2002bounded}
James~W Friedman and Claudio Mezzetti.
\newblock Bounded rationality, dynamic oligopoly, and conjectural variations.
\newblock \emph{Journal of Economic Behavior \& Organization}, 49\penalty0
  (3):\penalty0 287--306, 2002.

\bibitem[Frisch(1933)]{frisch1933monopole}
Ragnar Frisch.
\newblock Monopole, polypole, la notion de force dans l'{\'e}conomie.
\newblock \emph{Nation-al{\o}konomisiskk TidsskriifJtt,}, 71, 1933.

\bibitem[Ha and Kim(2022)]{junsoo2022convergence}
Junsoo Ha and Gunhee Kim.
\newblock On convergence of lookahead in smooth games.
\newblock In \emph{AIStats}, 2022.

\bibitem[Lang(2013)]{lang2013complex}
Serge Lang.
\newblock \emph{Complex analysis}, volume 103.
\newblock Springer Science \& Business Media, 2013.

\bibitem[Liu et~al.(2006)Liu, Lie, and Lo]{liu2006empirical}
JD~Liu, TT~Lie, and KL~Lo.
\newblock An empirical method of dynamic oligopoly behavior analysis in
  electricity markets.
\newblock \emph{IEEE TPS}, 21, 2006.

\bibitem[Olsder(1981)]{olsder1981memo}
G.J. Olsder.
\newblock A critical analysis of a new equilibrium concept.
\newblock \emph{Memorandum N.~329, Dept. Applied Maths., Twente University of
  Technology, The Netherlands.}, 71, 1981.

\bibitem[Perry(1982)]{perry1982oligopoly}
Martin~K Perry.
\newblock Oligopoly and consistent conjectural variations.
\newblock \emph{The Bell Journal of Economics}, pages 197--205, 1982.

\bibitem[Sastry(2013)]{sastry2013nonlinear}
Shankar Sastry.
\newblock \emph{Nonlinear systems: analysis, stability, and control},
  volume~10.
\newblock Springer Science \& Business Media, 2013.

\bibitem[Simon(1955)]{simon1955behavioral}
Herbert~A Simon.
\newblock A behavioral model of rational choice.
\newblock \emph{The quarterly journal of economics}, pages 99--118, 1955.

\bibitem[Willi et~al.(2022)Willi, Letcher, Treutlein, and
  Foerster]{willi2022cola}
Timon Willi, Alistair~Hp Letcher, Johannes Treutlein, and Jakob Foerster.
\newblock {COLA:} consistent learning with opponent-learning awareness.
\newblock In \emph{ICML}, 2022.

\end{thebibliography}

\end{document}